\documentclass[journal]{IEEEtran}
\IEEEoverridecommandlockouts
\usepackage[noadjust]{cite}
\usepackage{amsmath,amssymb,amsfonts}
\usepackage{algorithmic}
\usepackage{graphicx}
\usepackage{url}
\usepackage{textcomp,dsfont}
\usepackage{xcolor,comment}
\usepackage{nicefrac,xfrac}
\usepackage{subcaption}
\usepackage{flushend}

\newtheorem{theorem}{Theorem}

\newtheorem{lemma}{Lemma}
\newtheorem{identity}{Identity}

\def\BibTeX{{\rm B\kern-.05em{\sc i\kern-.025em b}\kern-.08em
    T\kern-.1667em\lower.7ex\hbox{E}\kern-.125emX}}

\begin{document}

\title{Delay-Doppler Channel Estimation using \\ Arbitrarily Modulated Data Transmissions}% \\
% \thanks{
% This work is supported by the National Science Foundation under grants 2342690 and 2148212, in part by funds from federal agency and industry partners as specified in the Resilient \& Intelligent NextG Systems (RINGS) program, and in part by the Air Force Office of Scientific Research under grants FA 8750-20-2-0504 and FA 9550-23-1-0249. \\
% $*$ denotes equal contribution.}
% }

\author{Nishant Mehrotra$^*$, Sandesh Rao Mattu$^*$, Robert Calderbank~\IEEEmembership{Life Fellow,~IEEE}
        % <-this % stops a space
\thanks{This work is supported by the National Science Foundation under grants 2342690 and 2148212, in part by funds from federal agency and industry partners as specified in the Resilient \& Intelligent NextG Systems (RINGS) program, and in part by the Air Force Office of Scientific Research under grants FA 8750-20-2-0504 and FA 9550-23-1-0249. \\ The authors are with the Department of Electrical and Computer Engineering, Duke University, Durham, NC, 27708, USA (email: nishantm\allowbreak@alumni.rice.edu,~\allowbreak sandeshmattu\allowbreak@iisc.ac.in,~\allowbreak robert.calderbank\allowbreak@duke\allowbreak.edu). \\ $*$ denotes equal contribution.}% <-this % stops a space
%\thanks{Manuscript received April 19, 2021; revised August 16, 2021.}
}

% \author{\IEEEauthorblockN{Nishant Mehrotra$^*$}
% \IEEEauthorblockA{\textit{Electrical and Computer Engineering} \\
% \textit{Duke University}\\
% Durham, USA \\
% nishant.mehrotra@duke.edu\vspace{0mm}}
% \and
% \IEEEauthorblockN{Sandesh Rao Mattu$^*$}
% \IEEEauthorblockA{\textit{Electrical and Computer Engineering} \\
% \textit{Duke University}\\
% Durham, USA \\
% sandesh.mattu@duke.edu\vspace{0mm}}
% \and
% \IEEEauthorblockN{Robert Calderbank}
% \IEEEauthorblockA{\textit{Electrical and Computer Engineering} \\
% \textit{Duke University}\\
% Durham, USA \\
% robert.calderbank@duke.edu\vspace{0mm}}
% }

\maketitle
\begin{abstract}
Conventional delay-Doppler (DD) communication and sensing systems require transmitting pilot frames at every channel coherence time interval in order to keep track of channel variations at the cost of spectral efficiency. In this paper, we propose an approach to utilize data transmissions that modulate arbitrary waveforms with zero-mean, unit average energy symbols for DD channel estimation without requiring pilot transmissions in every coherence time interval. Numerical evaluation over practical doubly-selective channel models demonstrate $\sim 1.8 \times$ improvement in uncoded spectral efficiency with our proposed data-based approach over conventional pilot-based approaches across various $6$G modulation schemes.
\end{abstract}

\begin{IEEEkeywords}
6G, Delay-Doppler Communication, Doubly-Selective Channels, Integrated Sensing \& Communication
\end{IEEEkeywords}

\section{Introduction}
\label{sec:intro}

\IEEEPARstart{D}{elay}-Doppler (DD) domain signal processing is an emerging framework for next-generation wireless as networks evolve to jointly support radar sensing \& communication capabilities~\cite{NGA_ISAC_PhaseI_2025,Yuan2024_ddsurvey,otfs_book}. By processing signals in delay and Doppler, such methods provide greater resilience to double selectivity and enable forming radar images of the scattering environment by estimating the channel in the DD domain.

Conventional approaches for DD channel estimation~\cite{bitspaper1,bitspaper2,otfs_book,Mohammed2024_pulseshaping,Mattu2024_zc,Calderbank2025_interleaved,Calderbank2025_isac,Mehrotra2025_WCLSpread,Mehrotra2026_wvfcomp} require dedicated pilot symbols in every coherence time interval, limiting the achievable spectral efficiency. In this paper, we show how to reduce pilot overhead by \emph{reusing decoded data symbols} modulated on \emph{arbitrary orthonormal bases} for DD channel estimation. This increases the spectral efficiency by not requiring pilot transmissions at every coherence time interval. Fig.~\ref{fig:overview} illustrates the concept for a simple example with channel coherence time spanning two frame intervals. Fig.~\ref{fig:overview}(\subref{fig:overview1}) depicts the conventional pilot-based approach, where DD channel estimates from a pilot frame are used to detect information symbols in the subsequent data frame\footnote{While we consider separate pilot and data frames, equivalent results may also be derived using superimposed~\cite{Calderbank2025_isac} and embedded~\cite{Mohammed2024_pulseshaping,Calderbank2025_interleaved} pilot structures.}. Assuming no data symbols in the pilot frame, the achieved spectral efficiency is $\text{SE} = \nicefrac{\log_{2}{|\mathcal{A}|}}{2}$ bits/s/Hz for information symbols drawn from a constellation $\mathcal{A}$ of size $|\mathcal{A}|$. Fig.~\ref{fig:overview}(\subref{fig:overview2}) depicts the proposed data-based approach, wherein pilot frames are transmitted once only every $(F+1) > 2$ frames and DD channel estimates are obtained from decoded information symbols in each of the $F$ data frames. This increases the spectral efficiency to $\text{SE} = \nicefrac{\log_{2}{|\mathcal{A}|}}{(1+\nicefrac{1}{F})}$ bits/s/Hz.

Our scheme generalizes prior work~\cite{Mattu2025_diffdet} where, building upon concepts proposed in~\cite{Tarokh1998_diffdet1}, a similar approach was proposed specifically for the Zak-OTFS (Zak transform-based orthogonal time frequency space) modulation. In this paper, we show that the proposed data-based DD channel estimation approach is valid \emph{for any modulation scheme}, including OFDM (orthogonal frequency division multiplexing)~\cite{Ebert1971_ofdm,Bingham1990_ofdm}, AFDM (affine frequency division multiplexing)~\cite{Bemani2023_afdm,Cho2025_dftpfdma,Zhao2016_ocdm,Hanzo2025_afdm_gen}, OTSM (orthogonal time sequency division multiplexing)~\cite{Viterbo2021_otsm,Hanzo2024_otsm_amp}, and Zak-OTFS~\cite{bitspaper1,bitspaper2,otfs_book}, provided the data symbol constellation is zero-mean and has unit average energy, e.g., $M$-PSK (phase-shift keying)\footnote{PSK is also shown optimal for OFDM data-based ranging in~\cite{Liu2025_isac_data_acf,Liu2025_isac_data_survey}.}, normalized symmetric $M$-QAM (quadrature amplitude modulation), etc.

Numerical simulations with a $3$GPP-compliant Vehicular-A channel model demonstrate $\sim 1.8 \times$ improvement in uncoded spectral efficiency using the proposed data-based approach over pilot-based approaches with various modulation schemes.

\begin{table}
    \centering
    \caption{Comparison of various DD channel estimation schemes. $B$ and $T$ denote frame bandwidth and time interval, $k$ denotes the number of turbo iterations, SE: spectral efficiency.}
    {
    \setlength{\tabcolsep}{2.25pt}
    \renewcommand{\arraystretch}{1.25}
    \begin{tabular}{|c|c|c|c|}
         \hline
         Approach & Modulation & Complexity & SE \\ 
         \hline
         Separate pilot-based~\cite{bitspaper1,bitspaper2,otfs_book} & Zak-OTFS & $O(BT\log T)$ & Low \\ 
         %\hline
         Spread pilot-based~\cite{Calderbank2025_isac} & Zak-OTFS & $O(kBT\log T)$ & High \\ 
         %\hline
         Data-based~\cite{Mattu2025_diffdet} & Zak-OTFS & $O(BT\log T)$ & High \\ 
         %\hline
         \textbf{Data-based (Proposed)} & \textbf{Any} & $\mathbf{O(BT\log T)}$ & \textbf{High} \\
         \hline
    \end{tabular}
    }
    \vspace*{-0.1in}
    \label{tab:prior_work}
\end{table}

\begin{figure*}
    \centering
    \begin{subfigure}{0.49\linewidth}
    \includegraphics[width=\textwidth]{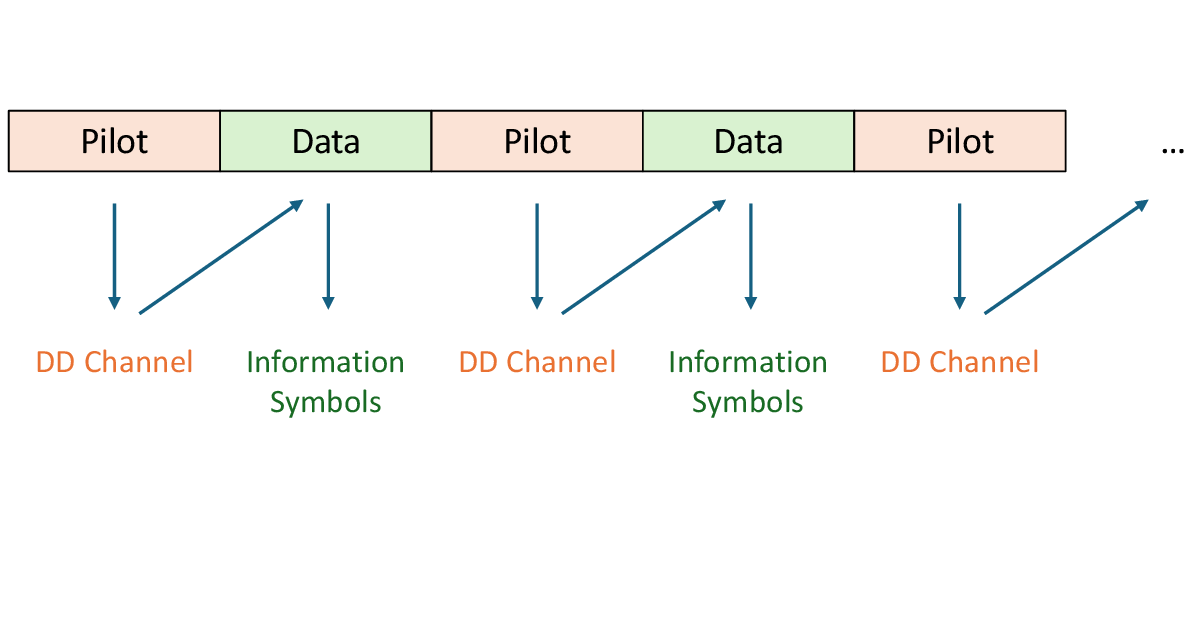}
    \caption{Conventional pilot-based DD channel estimation.}
        \label{fig:overview1}
    \end{subfigure}
    \begin{subfigure}{0.49\linewidth}
    \includegraphics[width=\textwidth]{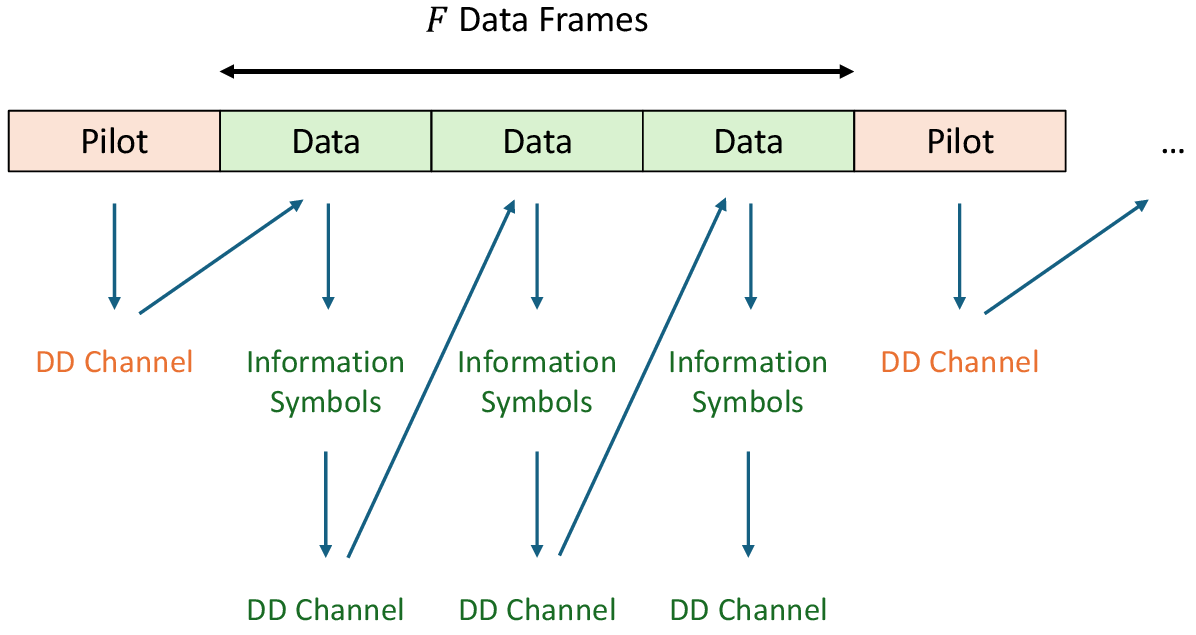}
    \caption{Proposed data-based DD channel estimation.}
        \label{fig:overview2}
    \end{subfigure}
    \caption{Frame structure assuming channel coherence time spans two frame durations, $T_{\mathsf{c}} = \frac{1}{\nu_{\max}} \approx 2T$. (a) Conventional pilot-based DD channel estimation requires transmitting pilots in every coherence interval, achieving spectral efficiency $\text{SE} = \nicefrac{\log_{2}{|\mathcal{A}|}}{2}$ bits/s/Hz. (b) The proposed approach uses data frames transmitted using any arbitrary modulation for DD channel estimation, reducing pilot overhead to once every $(F+1) > 2$ frames and achieving spectral efficiency $\text{SE} = \nicefrac{\log_{2}{|\mathcal{A}|}}{(1+\nicefrac{1}{F})}$ bits/s/Hz.}
    \label{fig:overview}
\end{figure*}

\textit{Notation:} $x$ denotes a complex scalar, $\mathbf{x}$ denotes a vector with $n$th entry $\mathbf{x}[n]$, and $\mathbf{X}$ denotes a matrix with $(n,m)$th entry $\mathbf{X}[n,m]$. $(\cdot)^{\ast}$ denotes complex conjugate, $(\cdot)^{\top}$ denotes transpose, $(\cdot)^{\mathsf{H}}$ denotes complex conjugate transpose. $\mathbb{Z}$ denotes the set of integers, $\mathbb{Z}_{N}$ the set of integers modulo $N$, and $(\cdot)_{{}_{N}}$ denotes the value modulo $N$. $\lfloor \cdot \rfloor$ and $\lceil \cdot \rceil$ denote the floor and ceiling functions. $a \odot b$ and $(a,b)$ respectively denote the bitwise dot product and greatest common divisor of two integers $a,b$. $\delta(\cdot)$ denotes the delta function, $\delta[\cdot]$ denotes the Kronecker delta function, $\mathds{1}{\{\cdot\}}$ denotes the indicator function, $\mathbf{I}_{N}$ denotes the $N \times N$ identity matrix, and $\mathbf{e}_{n}$ is the standard basis vector with value $1$ at location $n$ and zero elsewhere. %Calligraphic font $\mathcal{X}$ denotes operators or sets, with usage clear from context. $|\cdot|$ denotes set cardinality or absolute value, with usage clear from context. and . % and $(\cdot)^{-1}_{{}_{N}}$ denotes the inverse modulo $N$. and $\langle \mathbf{x}, \mathbf{y} \rangle = \sum_{n} \mathbf{x}[n] \mathbf{y}^{\ast}[n]$ denotes the inner product $\emptyset$ denotes the empty set. 

We make use of the following identity extensively.

\begin{identity}[\cite{murty2017evaluation}]
    \label{idty:sumrootsofunity}
    The sum of all $N$th roots of unity satisfies:
    \begin{align*}
        \sum_{n=0}^{N-1}e^{\frac{j2\pi}{N}kn} = \begin{cases}
        N \quad \text{if } \ k \equiv 0 \bmod{N} \\
        0 \quad \ \text{otherwise}
        \end{cases}.
    \end{align*}
\end{identity}

\section{Preliminaries: Doubly-Selective Signaling}
\label{sec:prelim}

In this Section, we describe a general discrete time system model for communication over doubly-selective channels.

The continuous time system model for communication over a doubly-selective channel is~\cite{Bello1963_ltv,bitspaper1,bitspaper2,otfs_book,Mehrotra2026_wvfcomp}:
\begin{align}
    \label{eq:prelim1}
    y(t) &= \iint h(\tau,\nu) x(t-\tau) e^{j2\pi\nu(t-\tau)} d\tau d\nu + w(t),
\end{align}
where $x(t)$ (resp. $y(t)$) denotes the transmit (resp. receive) waveform in continuous time, $w(t)$ denotes the additive noise at the receiver, and $h(\tau,\nu)$ represents the delay-Doppler (DD) channel spreading function in delay $\tau$ and Doppler $\nu$. The DD channel spreading function $h(\tau,\nu)$ captures the combined effect of transmitter pulse shaping, receiver matched filtering and propagation across the physical scattering environment with fractional delay and Doppler valued paths~\cite{otfs_book,bitspaper1,bitspaper2}:
\begin{align}
    \label{eq:prelim1a}
    \mathbf{h}(\tau,\nu) &= \mathbf{\tilde{w}}(\tau,\nu) *_\sigma \mathbf{h}_{\mathrm{phy}}(\tau,\nu) *_\sigma \mathbf{w}(\tau,\nu),
\end{align}
where $\mathbf{h}_{\mathrm{phy}}(\tau,\nu) = \sum_{i=1}^{P} h_i \delta(\tau-\tau_i) \delta(\nu-\nu_i)$ is the DD representation of the physical scattering environment with $P$ paths, $\mathbf{w}(\tau,\nu)$ is the DD transmit pulse shaping filter\footnote{Examples of DD pulse shaping filters include the sinc filter, $\mathbf{w}(\tau,\nu) = \sqrt{BT}~\text{sinc}(B\tau)~\text{sinc}(T\nu)$, Gaussian filter, root raised cosine filter, etc. See~\cite{bitspaper1,bitspaper2,otfs_book,Calderbank2025_isac,Mohammed2024_pulseshaping,Calderbank2025_interleaved,Gopalam2024_tfwindowing,Chockalingam2025_gs,Mehrotra2026_iota} for an overview of DD pulse shaping filters.}, $\mathbf{\tilde{w}}(\tau,\nu) = e^{j2\pi\nu\tau} \mathbf{w}^*(-\tau,-\nu)$ is the DD receiver matched filter, and $*_\sigma$ denotes twisted convolution\footnote{$\mathbf{a}(\tau,\nu)*_\sigma \mathbf{b}(\tau,\nu) = \iint \mathbf{a}(\tau',\nu') \mathbf{b}(\tau-\tau',\nu-\nu') e^{j2\pi\nu'(\tau-\tau')} d\tau' d\nu'$}. Let $\tau_{\max} = \max_{i\in\{1,\cdots,P\}} |\tau_i|$ and $\nu_{\max} = \max_{i\in\{1,\cdots,P\}} |\nu_i|$ respectively denote the delay and Doppler spread of the physical scattering environment.

We assume communication occurs over a finite bandwidth $B = M \Delta f$ and time interval $T = \nicefrac{N}{\Delta f}$ for integers $M,N$ and frequency spacing $\Delta f$; thus $BT = MN$ is the time-bandwidth product. Hence, we consider the discrete time version of the system model in~\eqref{eq:prelim1}, with $MN$ samples of the transmit and receive waveforms sampled at integer multiples of the delay resolution $\nicefrac{1}{B}$ and limited to duration $T$~\cite{Mehrotra2025_EURASIP,Mattu2025_npj,Mehrotra2026_wvfcomp}:
\begin{align}
    \label{eq:prelim2}
    \mathbf{y}[n] &= \sum_{k,l \in \mathbb{Z}} \mathbf{h}[k,l] \mathbf{x}[(n-k)_{{}_{MN}}] e^{\frac{j2\pi}{MN}l(n-k)} + \mathbf{w}[n],
\end{align}
where $0 \leq n \leq (MN-1)$ denotes the sampling index ($n = \lfloor MN \rfloor$ for $0 \leq t \leq T$), $\mathbf{w}$ denotes noise samples, and $\mathbf{h}[k,l] = h\big(\nicefrac{k}{B},\nicefrac{l}{T}\big)$ is the channel spreading function sampled at integer multiples of the delay / Doppler resolutions. 

Since the transmit and receive waveforms $\mathbf{x}$ and $\mathbf{y}$ in~\eqref{eq:prelim2} are $MN$-periodic sequences, $MN$ information symbols can be transmitted via an $MN$-dimensional orthonormal basis as:
\begin{align}
    \label{eq:prelim3}
    \mathbf{x}[n] &= \sum_{i = 0}^{MN-1} \mathbf{s}[i] \boldsymbol{\phi}_{i}[n],
\end{align}
where $\mathbf{s} \in \mathcal{A}^{MN \times 1}$ denotes the $MN$-length vector of information symbols drawn from a discrete constellation $\mathcal{A}$, and $\boldsymbol{\phi}$ is an orthonormal basis with $MN$ elements, with $i$th element $\boldsymbol{\phi}_{i} \in \mathbb{C}^{MN \times 1}$. Substituting~\eqref{eq:prelim3} in~\eqref{eq:prelim2}, we obtain:
\begin{align}
    \label{eq:prelim4}
    \mathbf{y}[n] = &\sum_{i = 0}^{MN-1} \mathbf{s}[i] \bigg(\underbrace{\sum_{k,l \in \mathbb{Z}} \mathbf{h}[k,l] \boldsymbol{\phi}_{i}[(n-k)_{{}_{MN}}] e^{\frac{j2\pi}{MN}l(n-k)}}_{\mathbf{G}[n,i]}\bigg) \nonumber \\ &+ \mathbf{w}[n],
\end{align}
where the term within the parenthesis denotes the $(n,i)$th element of an $MN \times MN$ channel matrix $\mathbf{G}$.

Recovering the information symbols $\mathbf{s}$ requires knowledge of the channel matrix $\mathbf{G}$, or equivalently, of $\mathbf{h}[k,l]$. Conventional approaches~\cite{bitspaper1,bitspaper2,otfs_book,Mohammed2024_pulseshaping,Mattu2024_zc,Calderbank2025_interleaved,Calderbank2025_isac,Mehrotra2025_WCLSpread,Mehrotra2026_wvfcomp} transmit known pilot sequences, typically one basis element $\mathbf{x} = \boldsymbol{\phi}_{i}$, to obtain a maximum likelihood estimate of $\mathbf{h}[k,l]$ via the \emph{cross-ambiguity function}\footnote{When $\mathbf{y} = \mathbf{x}$, $\mathbf{A}_{\mathbf{x},\mathbf{x}}[k, l]$ (abbrev. $\mathbf{A}_{\mathbf{x}}[k, l]$) is the self-ambiguity function.}: 
\begin{align}
    \label{eq:prelim6}
    \widehat{\mathbf{h}}[k,l] &= \mathbf{A}_{\mathbf{y},\mathbf{x}}[k,l] \nonumber \\
    &= \frac{1}{MN} \sum_{n = 0}^{MN-1} \mathbf{y}[n] \mathbf{x}^{*}[(n-k)_{{}_{MN}}]e^{-\frac{j2\pi}{MN}l(n-k)}.
\end{align}

Subsequently, entries of the matrix $\mathbf{G}$ are estimated via~\eqref{eq:prelim4} and used to recover the information symbols $\mathbf{s}$, e.g., via the minimum mean squared error (MMSE) estimator~\cite{otfs_book}.

Different choices of the basis $\boldsymbol{\phi}$ result in different modulation schemes\footnote{See~\cite{Mehrotra2026_wvfcomp} for a unified discussion on various bases / modulation schemes.}, e.g., OFDM, AFDM, OTSM and Zak-OTFS. %OFDM (orthogonal frequency division multiplexing)~\cite{Ebert1971_ofdm,Bingham1990_ofdm}, AFDM (affine frequency division multiplexing)~\cite{Bemani2023_afdm,Cho2025_dftpfdma,Zhao2016_ocdm,Hanzo2025_afdm_gen}, OTSM (orthogonal time sequency division multiplexing)~\cite{Viterbo2021_otsm,Hanzo2024_otsm_amp}, and Zak-OTFS (orthogonal time frequency space modulation)~\cite{bitspaper1,bitspaper2,otfs_book}.

\subsubsection{OFDM}
\label{subsubsec:prelim_ofdm}

The basis element in OFDM is~\cite{Ebert1971_ofdm,Bingham1990_ofdm}:
\begin{align}
    \label{eq:ofdm1}
    \boldsymbol{\phi}_{i}[n] = \frac{1}{\sqrt{M}} e^{\frac{j2\pi}{M}in} \mathds{1}\big\{\lfloor\nicefrac{n}{M}\rfloor = \lfloor\nicefrac{i}{M}\rfloor\big\}.
\end{align}

\subsubsection{AFDM}
\label{subsubsec:prelim_afdm}

The basis element in AFDM is~\cite{Bemani2023_afdm,Cho2025_dftpfdma,Hanzo2025_afdm_gen}:
\begin{align}
    \label{eq:afdm1}
    \boldsymbol{\phi}_{i}[n] = \frac{1}{\sqrt{MN}}e^{j2\pi\big(c_1n^2+c_2i^2+\frac{ni}{MN}\big)},
\end{align}
where $c_1, c_2 \in \mathbb{Z}$. The AFDM basis specializes to OCDM~\cite{Bemani2023_afdm} when $c_1 = c_2 = \nicefrac{1}{2MN}$ and to DFT-p-FDMA~\cite{Cho2025_dftpfdma} when $c_1 = c_2 = \nicefrac{\Delta}{MN}$, where $(\Delta,MN) = 1$.

% \subsubsection{ODDM}
% \label{subsubsec:prelim_oddm}

% The basis element in ODDM is~\cite{Yuan2022_oddm,Yuan2024_oddm_gen}:
% \begin{align}
%     \label{eq:oddm1}
%     \boldsymbol{\phi}_{i}[n] = \frac{1}{\sqrt{N}} e^{\frac{j2\pi}{N}{\lfloor\nicefrac{i}{M}\rfloor} \lfloor\nicefrac{n}{M}\rfloor} \mathds{1}\big\{n \equiv i \bmod{M}\big\}.
% \end{align}

\subsubsection{OTSM}
\label{subsubsec:prelim_otsm}

The basis element in OTSM is~\cite{Viterbo2021_otsm,Hanzo2024_otsm_amp}:
\begin{align}
    \label{eq:otsm1}
    \boldsymbol{\phi}_{i}[n] = \frac{1}{\sqrt{N}} (-1)^{\lfloor\nicefrac{i}{M}\rfloor \odot \lfloor\nicefrac{n}{M}\rfloor} \mathds{1}\big\{n \equiv i \bmod{M}\big\},
\end{align}
where $\odot$ denotes the bitwise dot product.

\subsubsection{Zak-OTFS}
\label{subsubsec:prelim_zak}

The basis element in Zak-OTFS is~\cite{bitspaper1,bitspaper2,otfs_book}:
\begin{align}
    \label{eq:zakotfs1}
    \boldsymbol{\phi}_{i}[n] &= \frac{1}{\sqrt{N}} \sum_{d \in \mathbb{Z}} e^{j\frac{2\pi}{N} d \lfloor \nicefrac{i}{M} \rfloor} \delta[n-(i)_{{}_{M}}-dM] \nonumber \\
    &= \frac{1}{\sqrt{N}} e^{\frac{j2\pi}{N}{\lfloor\nicefrac{i}{M}\rfloor} \lfloor\nicefrac{n}{M}\rfloor} \mathds{1}\big\{n \equiv i \bmod{M}\big\},
\end{align}
termed \emph{pulsone} (pulse train modulated by a tone).

\section{Data-Based DD Channel Estimation}
\label{sec:diff_comm}

As mentioned in Section~\ref{sec:prelim}, conventional approaches estimate the DD channel spreading function $\mathbf{h}[k,l]$ by transmitting known pilot symbols in every coherence time interval, limiting the achievable throughput. In this Section, we establish that DD channel estimation is possible using \emph{data symbols} modulated on \emph{any arbitrary orthonormal basis} $\boldsymbol{\phi}$, including the bases described in Section~\ref{sec:prelim}. This property enables \emph{reusing decoded data as pilots} to estimate the DD channel without requiring pilot transmissions in every coherence time interval.

Section~\ref{subsec:diff_comm_approach} describes the core innovation underlying our approach. Assuming known information symbols, in Theorem~\ref{thm:data_amb} we establish that data-based DD channel estimates match the ground-truth channel spreading function $\mathbf{h}[k,l]$ in expectation for arbitrary orthonormal bases $\boldsymbol{\phi}$ under benign conditions on the information symbol constellation $\mathcal{A}$. In Section~\ref{subsec:diff_comm_feasibility}, we derive necessary conditions for practical implementation of data-based DD channel estimation.

\subsection{DD Channel Estimation using Data Frames}
\label{subsec:diff_comm_approach}

Prior to deriving our main result in Theorem~\ref{thm:data_amb}, we present the following Lemma on the cross-ambiguity-based estimate of the sampled DD channel spreading function $\mathbf{h}[k,l]$ via~\eqref{eq:prelim6}.

\begin{lemma}
    \label{lmm:heff_twistconv}
    In the absence of noise\footnote{In the presence of noise, $\widehat{\mathbf{h}}[k,l]$ is limited to the coarsely known channel support $\mathcal{S}$ (with $\geq 2$ guard bins to account for pulse shape spread). The noise standard deviation is calculated in the complement set $\mathcal{S}^{c}$ and $\widehat{\mathbf{h}}[k,l]$ is subsequently thresholded in magnitude to $3 \times$ the noise standard deviation~\cite{Calderbank2025_isac}.}, the cross-ambiguity-based estimate $\widehat{\mathbf{h}}[k,l]$ from~\eqref{eq:prelim6} is given by the discrete twisted convolution of $\mathbf{h}[k,l]$ and the self-ambiguity function of $\mathbf{x}$:
    \begin{align*}
        \widehat{\mathbf{h}}[k,l]\!&= \mathbf{h}[k,l] *_{\sigma_{{}_{d}}} \mathbf{A}_{\mathbf{x}}[k, l] \nonumber \\
        &=\!\sum_{k',l' \in \mathbb{Z}} \mathbf{h}[k',l'] \mathbf{A}_{\mathbf{x}}[(k-k'), (l-l')] e^{\frac{j2\pi}{MN}l'(k-k')}.
    \end{align*}
\end{lemma}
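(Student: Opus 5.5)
Substitute the noiseless received signal from~\eqref{eq:prelim2} directly into the cross-ambiguity definition~\eqref{eq:prelim6}, exchange the finite sum over $n$ with the sum over $(k',l')\in\mathbb{Z}^2$, and recognise the inner sum over $n$ as a self-ambiguity function of $\mathbf{x}$ evaluated at the shifted arguments $(k-k', l-l')$. Concretely, with $\mathbf{w}\equiv 0$,
\begin{align*}
\widehat{\mathbf{h}}[k,l] = \sum_{k',l'} \mathbf{h}[k',l'] \frac{1}{MN}\sum_{n=0}^{MN-1} \mathbf{x}[(n-k')_{{}_{MN}}]\,\mathbf{x}^{*}[(n-k)_{{}_{MN}}]\, e^{\frac{j2\pi}{MN}\left(l'(n-k') - l(n-k)\right)}.
\end{align*}
Exchanging the order of summation is justified because the $n$-sum is finite and $\mathbf{h}[k',l']$ is assumed absolutely summable (equivalently, effectively supported on a finite set).

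The key step is a change of variables $m = n-k'$ in the inner sum. Since $\mathbf{x}$ is $MN$-periodic and every exponential has the form $e^{\frac{j2\pi}{MN}(\text{integer})}$, the summand is $MN$-periodic in $n$. Hence the sum over $n\in\mathbb{Z}_{MN}$ equals the sum over $m\in\mathbb{Z}_{MN}$, and the reductions modulo $MN$ can be applied freely. The summand then becomes
\begin{align*}
\mathbf{x}[m]\,\mathbf{x}^{*}[(m-(k-k'))_{{}_{MN}}]\, e^{\frac{j2\pi}{MN}\left(l'm - l(m + k' - k)\right)}.
\end{align*}
Next I regroup the phase as
\begin{align*}
l'm - l(m-(k-k')) = -(l-l')\big(m-(k-k')\big) + l'(k-k').
\end{align*}
This is the only algebraic point that needs care, because the twisted-convolution phase $e^{\frac{j2\pi}{MN}l'(k-k')}$ has to appear with exactly the sign and convention stated in the Lemma. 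It is worth checking the identity once by direct expansion.

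After the regrouping, the factor $e^{\frac{j2\pi}{MN}l'(k-k')}$ is independent of $m$ and can be pulled out of the inner sum. What remains is
\begin{align*}
\frac{1}{MN}\sum_{m} \mathbf{x}[m]\,\mathbf{x}^{*}[(m-(k-k'))_{{}_{MN}}]\,e^{-\frac{j2\pi}{MN}(l-l')(m-(k-k'))},
\end{align*}
which by~\eqref{eq:prelim6} with $\mathbf{y}=\mathbf{x}$ is exactly $\mathbf{A}_{\mathbf{x}}[(k-k'),(l-l')]$. Substituting this back gives the stated discrete twisted convolution.

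The main obstacle I expect is bookkeeping: keeping the modular reductions consistent with the non-reduced phase exponents. I will handle it by noting at the outset that all phases are $MN$-th roots of unity with integer exponents, so any integer shift by a multiple of $MN$ leaves them unchanged. No use of Identity~\ref{idty:sumrootsofunity} or of the basis structure~\eqref{eq:prelim3} is needed; the result holds for any $MN$-periodic $\mathbf{x}$.
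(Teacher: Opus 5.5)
Your proposal is correct and follows essentially the same route as the paper: substitute the noiseless received signal into~\eqref{eq:prelim6}, reindex with $n' = n-k'$, and split the resulting phase $(l'-l)n' + l(k-k')$ into the self-ambiguity phase $-(l-l')(n'-(k-k'))$ plus the twist factor $l'(k-k')$. Your explicit justifications for exchanging the sums and for the $MN$-periodicity of the summand under the reindexing make rigorous two steps that the paper performs without comment.
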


\begin{IEEEproof}
    Substituting~\eqref{eq:prelim4} in the absence of noise into~\eqref{eq:prelim6}:
    \begin{align*}
        % \label{eq:radar6}
        \widehat{\mathbf{h}}[k,l]\!&=\!\frac{1}{MN}\!\sum_{n=0}^{MN-1} \mathbf{y}[n] \mathbf{x}^{*}[(n-k)_{{}_{MN}}]e^{-\frac{j2\pi}{MN}l(n-k)} \nonumber \\
        % &=\!\sum_{k',l' \in \mathbb{Z}} \mathbf{h}[k',l'] \sum_{n=0}^{MN-1} \mathbf{x}[(n-k')_{{}_{MN}}] \mathbf{x}^{*}[(n-k)_{{}_{MN}}] \nonumber \\ &\qquad\qquad\qquad\qquad\quad\times \frac{1}{MN} e^{\frac{j2\pi}{MN}\big[l'(n-k')-l(n-k)\big]} \nonumber \\
        &=\!\sum_{k',l' \in \mathbb{Z}} \mathbf{h}[k',l'] \sum_{n'=0}^{MN-1} \mathbf{x}[n'] \mathbf{x}^{*}[(n'-(k-k'))_{{}_{MN}}] \nonumber \\ &\qquad\qquad\qquad\qquad\quad\times \frac{1}{MN} e^{\frac{j2\pi}{MN}\big[(l'-l)n'+l(k-k')\big]} \nonumber \\
        &=\!\sum_{k',l' \in \mathbb{Z}} \mathbf{h}[k',l'] \mathbf{A}_{\mathbf{x}}[(k-k'), (l-l')] e^{\frac{j2\pi}{MN}l'(k-k')}.
    \end{align*}
\end{IEEEproof}

In other words, Lemma~\ref{lmm:heff_twistconv} shows that the estimate $\widehat{\mathbf{h}}[k,l]$ corresponds to the ground truth DD channel spreading function $\mathbf{h}[k,l]$ ``blurred'' by the self-ambiguity function $\mathbf{A}_{\mathbf{x}}[k,l]$. Ideally, for a ``thumbtack'' self-ambiguity function, $\mathbf{A}_{\mathbf{x}}[k,l] = \delta[k] \delta[l]$, $\widehat{\mathbf{h}}[k,l] = \mathbf{h}[k,l]$. Since an ideal ``thumbtack'' self-ambiguity function cannot be realized due to Moyal's Identity~\cite{Mehrotra2025_EURASIP}, prior works~\cite{bitspaper1,bitspaper2,otfs_book,Mohammed2024_pulseshaping,Mattu2024_zc,Calderbank2025_interleaved,Calderbank2025_isac,Mehrotra2025_WCLSpread,Mehrotra2026_wvfcomp} use basis elements of \emph{predictable} bases as pilot sequences, $\mathbf{x} = \boldsymbol{\phi}_{i}$, that have \emph{local} ``thumbtack'' self-ambiguity~\cite{Mehrotra2026_wvfcomp}: $\mathbf{A}_{\boldsymbol{\phi}_{i}}[0,0] = 1,\mathbf{A}_{\boldsymbol{\phi}_{i}}[k',l'] = 0~\text{for}~(k',l')\in\mathcal{K}_{\mathcal{S}}\setminus\{(0,0)\}$, where for DD channel support $\mathcal{S} = \big\{(k,l):|\mathbf{h}[k,l]|\neq0\big\}$, $\mathcal{K}_{\mathcal{S}} = \big\{(k',l'):\mathcal{S}\cap\big(\mathcal{S}+(k',l')\big)\neq\emptyset\big\}$. When the channel is supported within $\mathcal{S}$, this property ensures $\widehat{\mathbf{h}}[k,l] = \mathbf{h}[k,l]$ for all $k,l \in \mathcal{S}$. Examples of bases satisfying this property include AFDM, OTSM and Zak-OTFS, but not OFDM~\cite{Mehrotra2026_wvfcomp}. However, a drawback of this approach is that it requires transmitting a pilot frame at every coherence time interval, which limits the spectral efficiency.

In the following Theorem, we show that the self-ambiguity function $\mathbf{A}_{\mathbf{x}}[k,l]$ of data-modulated waveforms $\mathbf{x}$ as in~\eqref{eq:prelim3} approximates an ideal ``thumbtack'' in expected value, regardless of the choice of basis $\boldsymbol{\phi}$. This property enables using \emph{data frames} transmitted using \emph{any arbitrary modulation scheme} -- including OFDM which does not satisfy the aforementioned predictability condition for pilot-based DD channel estimation -- to estimate the DD channel without requiring pilot transmissions in every coherence time interval. Fig.~\ref{fig:overview} illustrates our high-level approach for an example with channel coherence time spanning two frame durations. Our proposed data-based channel estimation approach reduces the pilot overhead from $\nicefrac{1}{2}$ to $\nicefrac{1}{(F+1)}$, for $(F+1) > 2$, thus increasing the spectral efficiency from $\nicefrac{\log_{2}{|\mathcal{A}|}}{2}$ bits/s/Hz to $\nicefrac{\log_{2}{|\mathcal{A}|}}{(1+\nicefrac{1}{F})}$ bits/s/Hz.

\begin{theorem}
    \label{thm:data_amb}
    For data-modulated waveforms $\mathbf{x}$ from~\eqref{eq:prelim3}, any orthonormal basis $\boldsymbol{\phi}$ modulated using zero-mean, unit average energy constellations $\mathcal{A}$ has a self-ambiguity function approximating an ideal ``thumbtack'' in expectation, $\mathbb{E}\big[\mathbf{A}_{\mathbf{x}}[k,l]\big] = \mathds{1}\big\{k,l \equiv 0 \bmod{MN}\big\}$, for which the cross-ambiguity function in~\eqref{eq:prelim6} is an unbiased estimator, $\mathbb{E}\big[\widehat{\mathbf{h}}[k,l]\big] = \mathbf{h}[k,l]$.
\end{theorem}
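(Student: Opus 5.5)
The plan is to compute $\mathbb{E}[\mathbf{A}_{\mathbf{x}}[k,l]]$ directly by substituting the expansion~\eqref{eq:prelim3} into the definition~\eqref{eq:prelim6} with $\mathbf{y}=\mathbf{x}$. I will then pass the expectation through the twisted convolution of Lemma~\ref{lmm:heff_twistconv}. The statistical assumption I will make explicit is that the symbols $\mathbf{s}[i]$ are i.i.d.\ (or at least pairwise uncorrelated) draws from $\mathcal{A}$. Under this assumption, zero mean and unit average energy give $\mathbb{E}\big[\mathbf{s}[i]\mathbf{s}^{*}[i']\big]=\delta[i-i']$.

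\textbf{Step 1 (expanding the ambiguity function).} Substituting~\eqref{eq:prelim3} gives
\begin{align*}
\mathbf{A}_{\mathbf{x}}[k,l] = \frac{1}{MN}\sum_{i,i'}\mathbf{s}[i]\mathbf{s}^{*}[i']\sum_{n=0}^{MN-1}\boldsymbol{\phi}_{i}[n]\boldsymbol{\phi}_{i'}^{*}[(n-k)_{{}_{MN}}]e^{-\frac{j2\pi}{MN}l(n-k)}.
\end{align*}
Taking expectations and using the uncorrelatedness collapses the double sum to the diagonal $i=i'$:
\begin{align*}
\mathbb{E}\big[\mathbf{A}_{\mathbf{x}}[k,l]\big]=\frac{1}{MN}\sum_{n=0}^{MN-1}e^{-\frac{j2\pi}{MN}l(n-k)}\sum_{i=0}^{MN-1}\boldsymbol{\phi}_{i}[n]\boldsymbol{\phi}_{i}^{*}[(n-k)_{{}_{MN}}].
\end{align*}

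\textbf{Step 2 (the key step).} The main obstacle is evaluating the inner sum without knowing anything about the basis. Here I use that the matrix $\boldsymbol{\Phi}$ with columns $\boldsymbol{\phi}_i$ is square and unitary. Hence $\boldsymbol{\Phi}\boldsymbol{\Phi}^{\mathsf{H}}=\mathbf{I}_{MN}$ as well, which is completeness (rows orthonormal), not merely column orthonormality. Therefore
\begin{align*}
\sum_i\boldsymbol{\phi}_{i}[n]\boldsymbol{\phi}_{i}^{*}[m]=\delta[(n-m)_{{}_{MN}}],
\end{align*}
and with $m=(n-k)_{{}_{MN}}$ this equals $\mathds{1}\{k\equiv 0 \bmod MN\}$, independent of $n$. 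What remains is $\frac{1}{MN}\sum_n e^{-\frac{j2\pi}{MN}l(n-k)}$. The factor $e^{\frac{j2\pi}{MN}lk}$ equals $1$ once $k\equiv 0$, and by Identity~\ref{idty:sumrootsofunity} the sum equals $\mathds{1}\{l\equiv 0 \bmod MN\}$. This yields the claimed thumbtack $\mathbb{E}[\mathbf{A}_{\mathbf{x}}[k,l]]=\mathds{1}\{k,l\equiv 0\bmod MN\}$. Note that the argument uses only unitarity, so it holds uniformly for OFDM, AFDM, OTSM and Zak-OTFS.

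\textbf{Step 3 (unbiasedness).} By Lemma~\ref{lmm:heff_twistconv} and linearity of expectation (the channel being deterministic),
\begin{align*}
\mathbb{E}\big[\widehat{\mathbf{h}}[k,l]\big]=\sum_{k',l'}\mathbf{h}[k',l']\,\mathbb{E}\big[\mathbf{A}_{\mathbf{x}}[k-k',l-l']\big]e^{\frac{j2\pi}{MN}l'(k-k')}.
\end{align*}
Only terms with $k'\equiv k$ and $l'\equiv l \bmod MN$ survive, and for these the phase factor is $1$. I will then argue that the channel support lies within one fundamental period in delay and Doppler, i.e., $\tau_{\max}<T$-scale and $\nu_{\max}<B$-scale, so aliased copies vanish. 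Under that assumption the only surviving term is $(k',l')=(k,l)$, giving $\mathbb{E}[\widehat{\mathbf{h}}[k,l]]=\mathbf{h}[k,l]$.

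If noise is included, it is zero-mean and independent of $\mathbf{x}$, so it contributes nothing in expectation to~\eqref{eq:prelim6}.
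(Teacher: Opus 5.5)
Your proposal is correct and follows the paper's proof essentially step for step. You expand $\mathbf{A}_{\mathbf{x}}$ into diagonal and cross terms, kill the cross terms via $\mathbb{E}\big[\mathbf{s}[i]\mathbf{s}^{*}[i']\big]=\delta[i-i']$, collapse $\sum_i \boldsymbol{\phi}_i[n]\boldsymbol{\phi}_i^{*}[(n-k)_{{}_{MN}}]$ to $\mathds{1}\{k\equiv 0 \bmod MN\}$, finish with Identity~\ref{idty:sumrootsofunity}, and push the expectation through Lemma~\ref{lmm:heff_twistconv} under the one-period support assumption, exactly as the paper does. Two of your details sharpen the paper's argument: your remark that a square orthonormal basis satisfies $\boldsymbol{\Phi}\boldsymbol{\Phi}^{\mathsf{H}}=\mathbf{I}_{MN}$ (completeness) makes precise the step the paper compresses into ``since the basis is orthonormal,'' and relying only on $\mathbb{E}\big[|\mathbf{s}[i]|^{2}\big]=1$ rather than $|\mathbf{s}[i]|^2=1$ correctly covers non-constant-modulus constellations such as $16$-QAM.
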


\begin{figure*}
    \centering
    \begin{subfigure}{0.24\linewidth}
    \includegraphics[width=\textwidth]{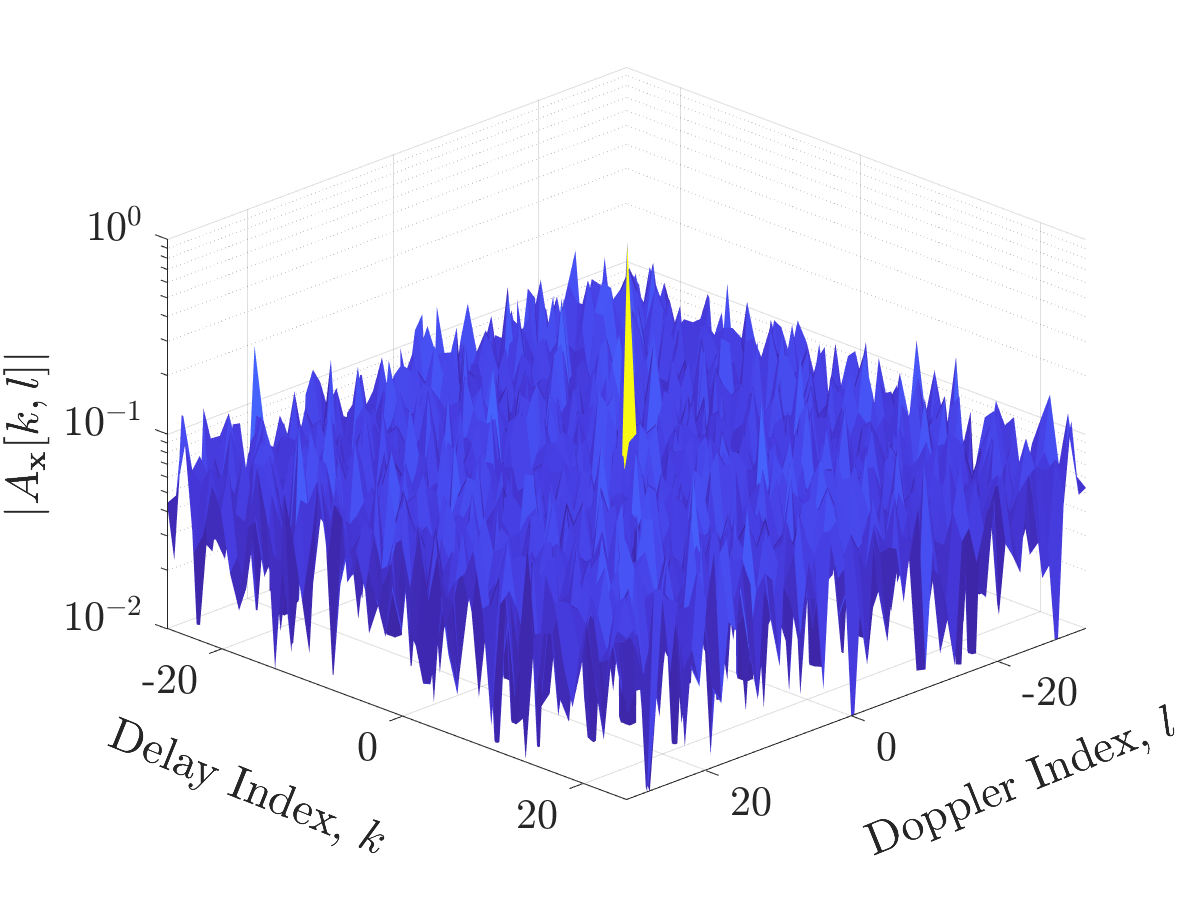}
    \caption{Zak-OTFS}
        \label{fig:crossamb_data1}
    \end{subfigure}
    \begin{subfigure}{0.24\linewidth}
    \includegraphics[width=\textwidth]{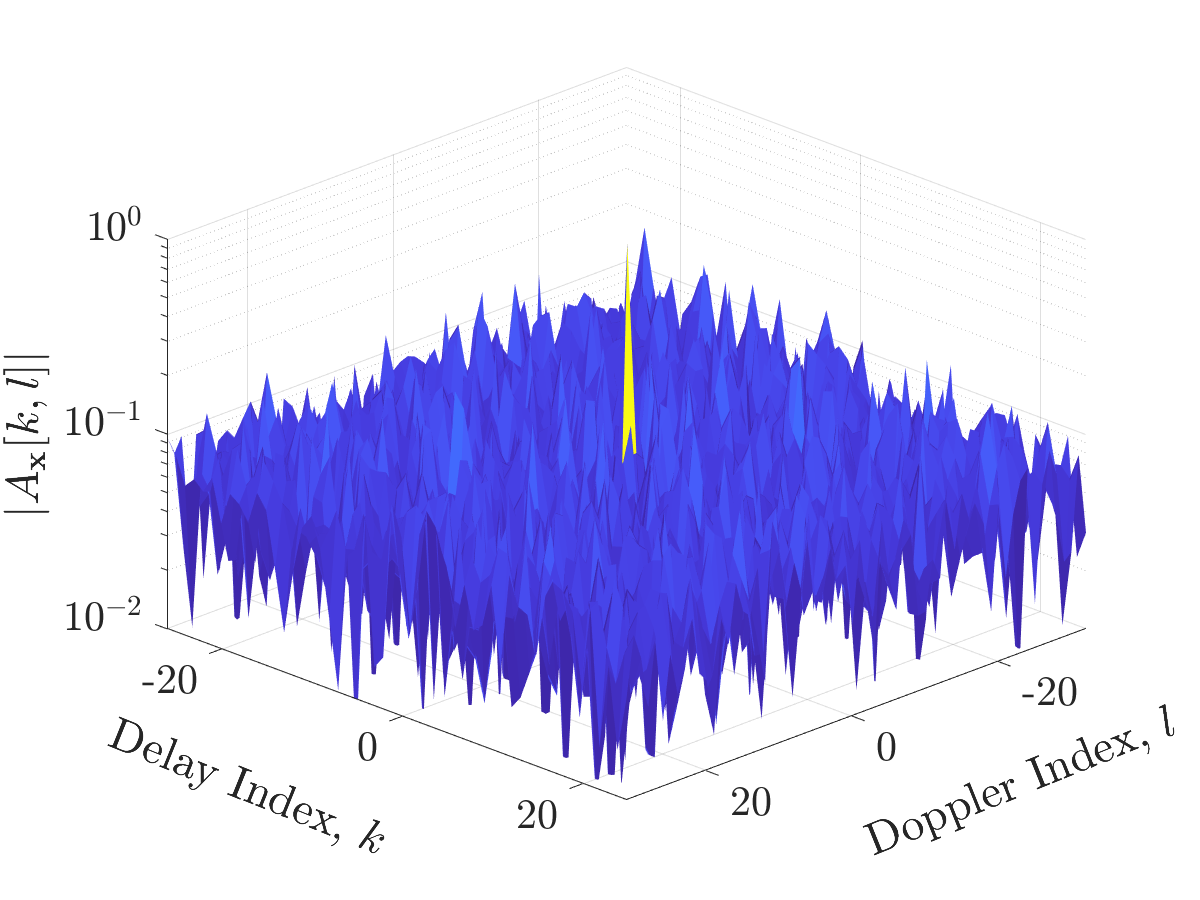}
    \caption{OTSM}
        \label{fig:crossamb_data2}
    \end{subfigure}
    \begin{subfigure}{0.24\linewidth}
    \includegraphics[width=\textwidth]{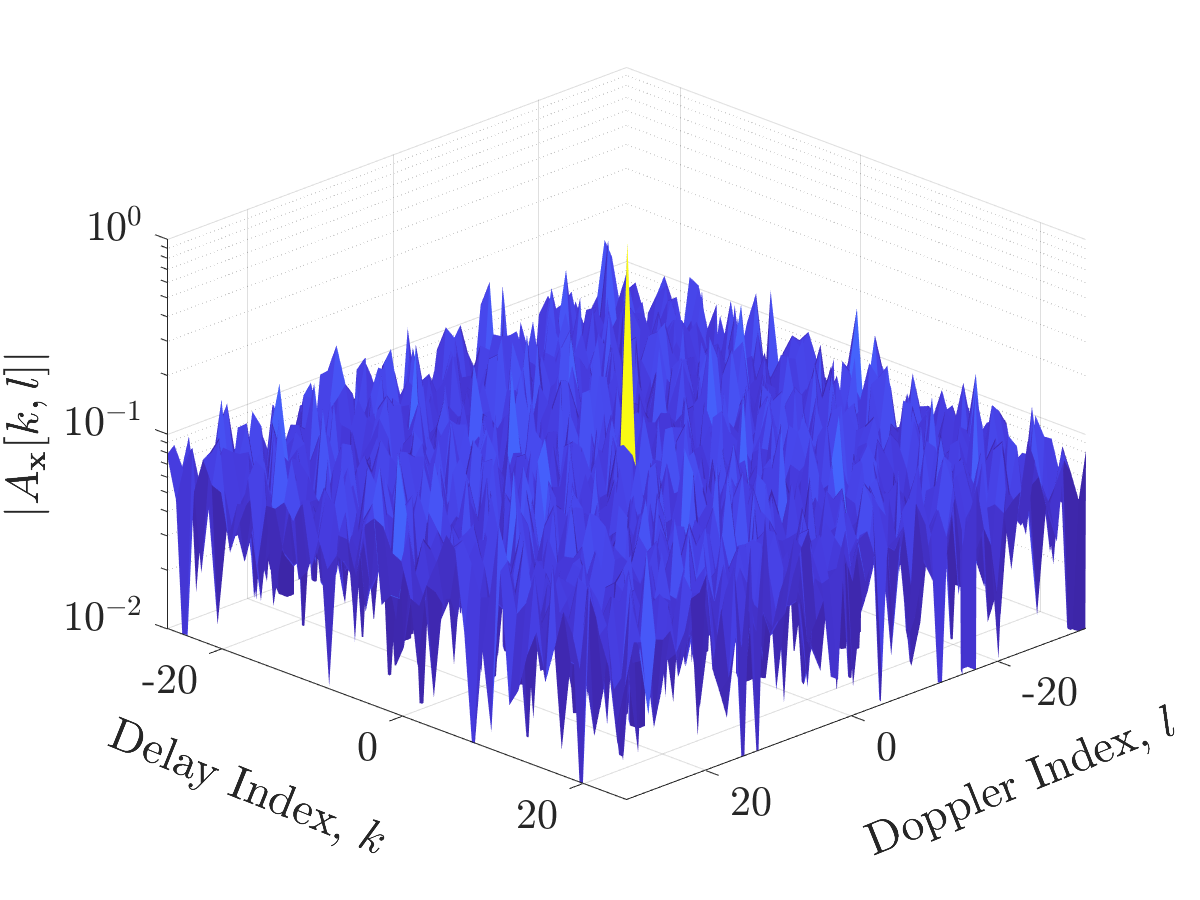}
    \caption{AFDM}
        \label{fig:crossamb_data3}
    \end{subfigure}
    \begin{subfigure}{0.24\linewidth}
    \includegraphics[width=\textwidth]{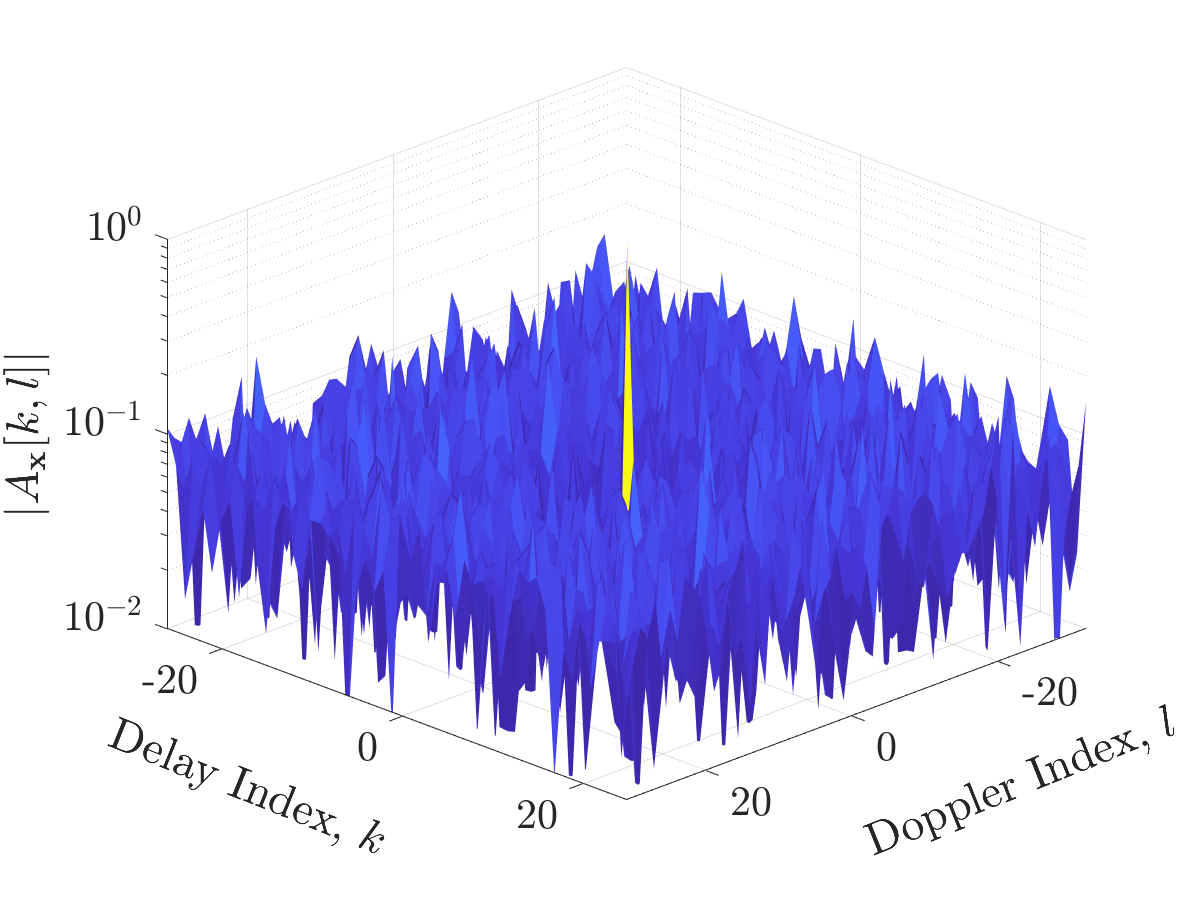}
    \caption{OFDM}
        \label{fig:crossamb_data4}
    \end{subfigure}
    \hfill
    \begin{subfigure}{0.24\linewidth}
    \includegraphics[width=\textwidth]{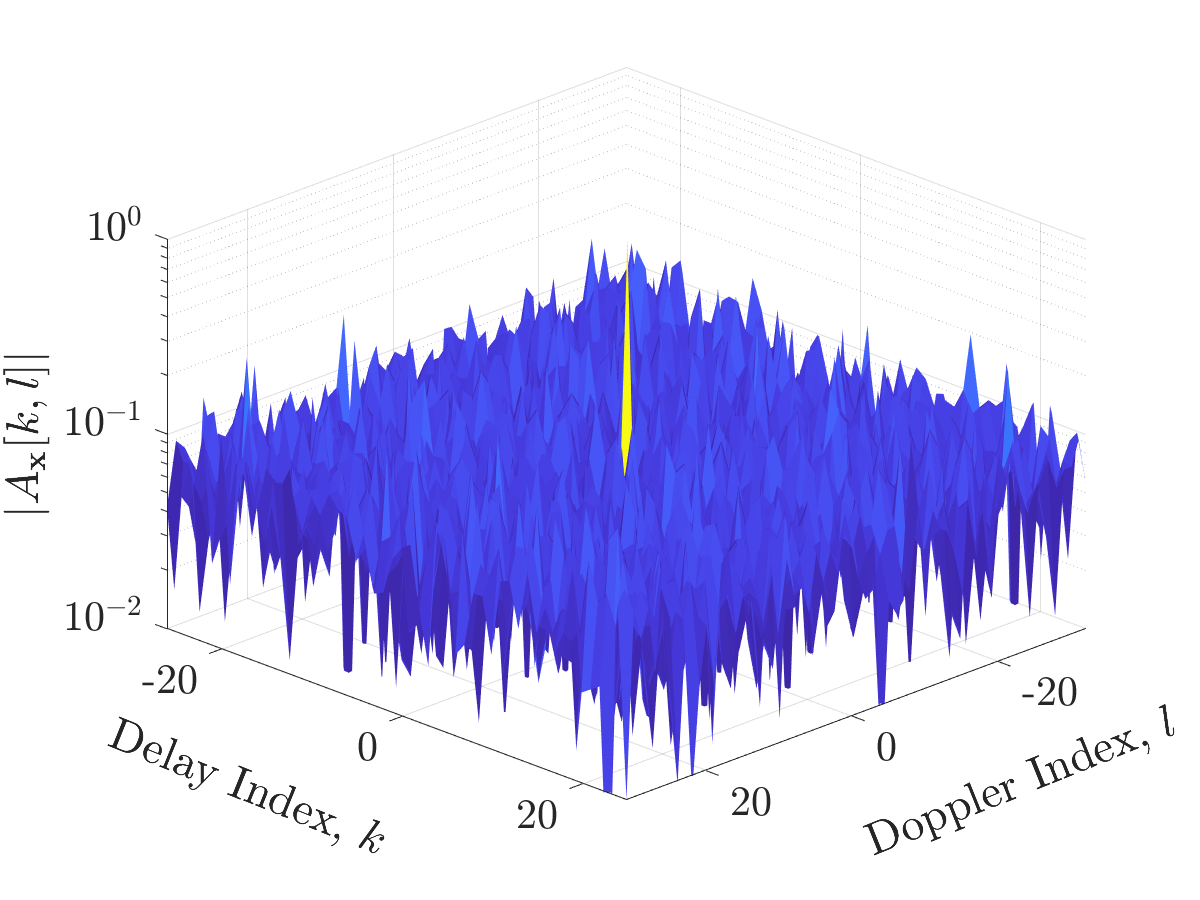}
    \caption{Zak-OTFS}
        \label{fig:crossamb_data1}
    \end{subfigure}
    \begin{subfigure}{0.24\linewidth}
    \includegraphics[width=\textwidth]{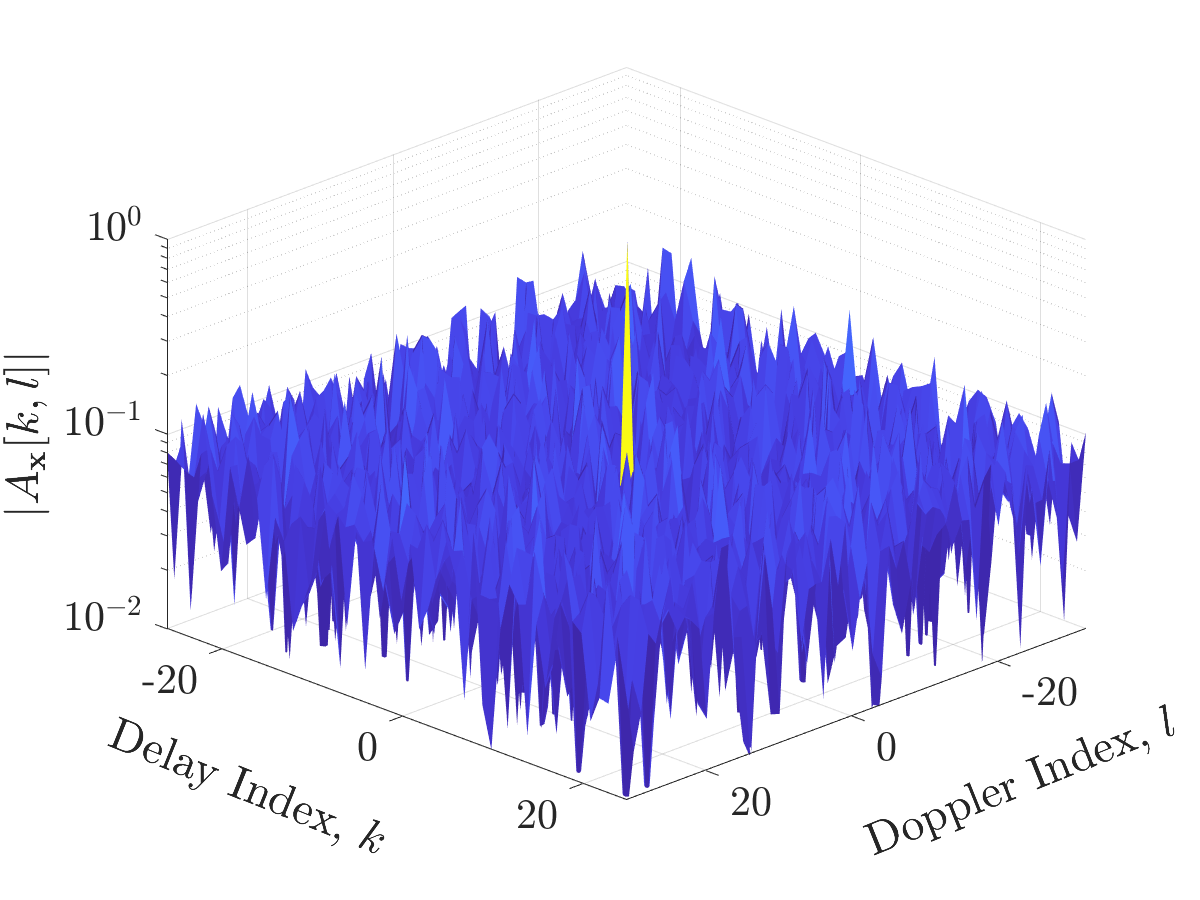}
    \caption{OTSM}
        \label{fig:crossamb_data2}
    \end{subfigure}
    \begin{subfigure}{0.24\linewidth}
    \includegraphics[width=\textwidth]{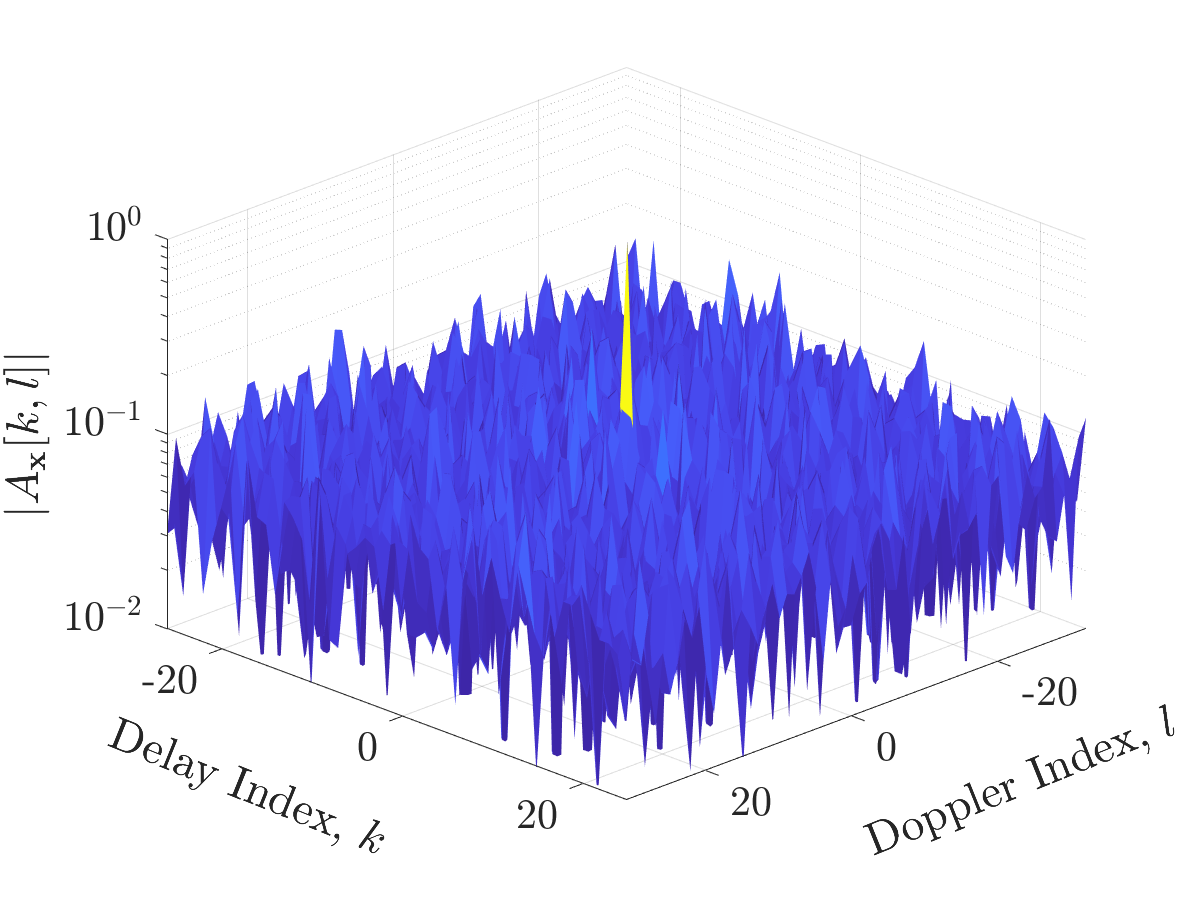}
    \caption{AFDM}
        \label{fig:crossamb_data3}
    \end{subfigure}
    \begin{subfigure}{0.24\linewidth}
    \includegraphics[width=\textwidth]{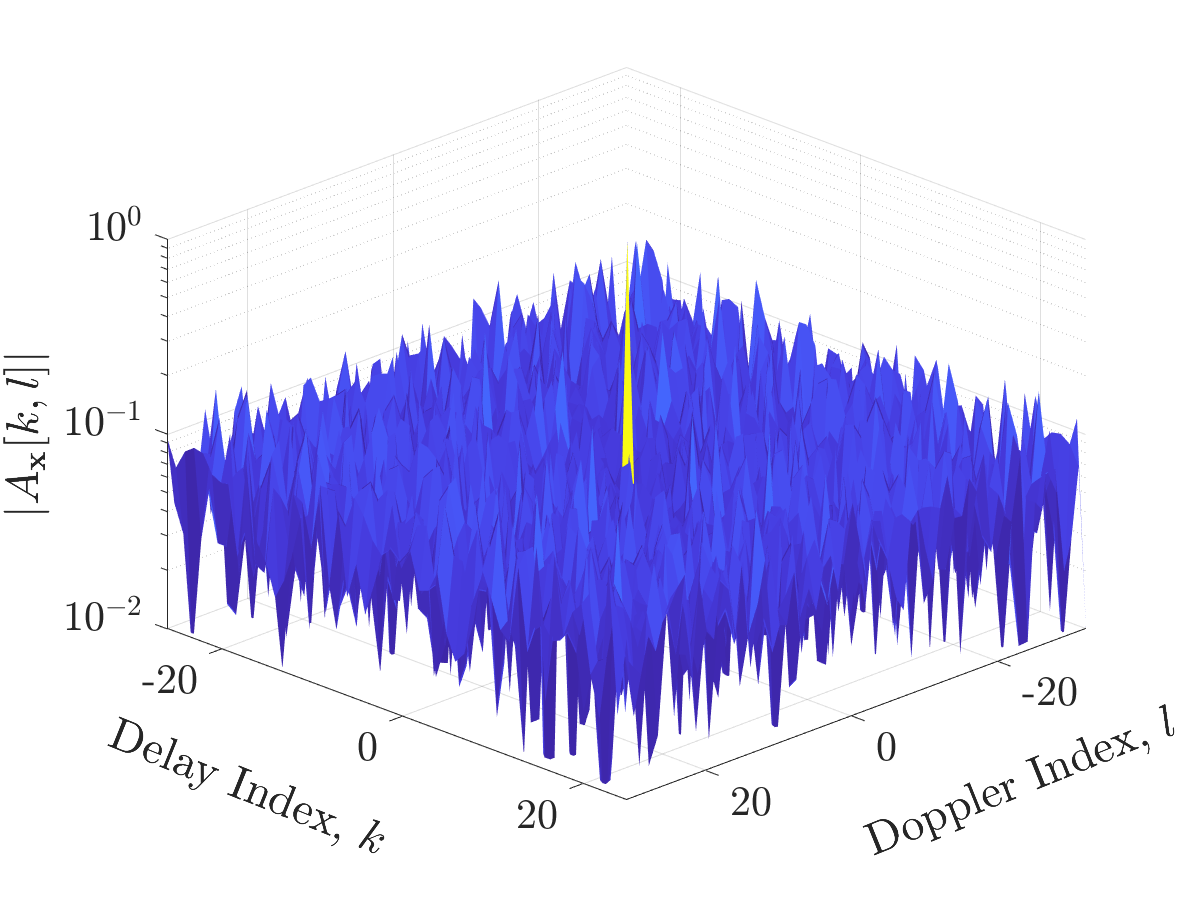}
    \caption{OFDM}
        \label{fig:crossamb_data4}
    \end{subfigure}
    \caption{Data-based self-ambiguity function $\mathbf{A}_{\mathbf{x}}[k,l]$ approximates an ideal ``thumbtack'' in expected value for any basis $\boldsymbol{\phi}$. The information symbols are drawn from $4$-QAM in (a)-(d) and from $16$-QAM in (e)-(h).}
    \label{fig:crossamb_data}
\end{figure*}

\begin{IEEEproof}
    The self-ambiguity $\mathbf{A}_{\mathbf{x}}[k,l]$ of $\mathbf{x}$ in~\eqref{eq:prelim3} is:
    \begin{align}
        \label{eq:sys4}
        \mathbf{A}_{\mathbf{x}}[k,l] &= \frac{1}{MN} \sum_{n = 0}^{MN-1} \mathbf{x}[n] \mathbf{x}^{*}\big[(n-k)_{{}_{MN}}\big] e^{-\frac{j2\pi}{MN}l(n-k)} \nonumber \\
        &= \sum_{i = 0}^{MN-1} \sum_{j = 0}^{MN-1} \mathbf{s}[i] \mathbf{s}^{*}[j] \mathbf{A}_{\boldsymbol{\phi}_{i},\boldsymbol{\phi}_{j}}[k,l] \nonumber \\
        &= \sum_{i = 0}^{MN-1} |\mathbf{s}[i]|^{2} \mathbf{A}_{\boldsymbol{\phi}_{i}}[k,l] \nonumber \\ &+ \sum_{i = 0}^{MN-1} \sum_{j \neq i} \mathbf{s}[i] \mathbf{s}^{*}[j] \mathbf{A}_{\boldsymbol{\phi}_{i},\boldsymbol{\phi}_{j}}[k,l].
    \end{align}
    
    We assume the data symbols $\mathbf{s}[i] \in \mathcal{A}$ are drawn i.i.d. from a \emph{unit energy}, \emph{zero-mean} constellation $\mathcal{A}$; hence, $|\mathbf{s}[i]|^{2} = 1$ and $\mathbb{E}\big[\mathbf{s}[i] \mathbf{s}^{*}[j]\big] = \delta[i-j]$. The expected value of~\eqref{eq:sys4} is thus:
    \begin{align}
        \label{eq:sys5}
        \mathbb{E}\big[\mathbf{A}_{\mathbf{x}}[k,l]\big] &= \sum_{i = 0}^{MN-1} \underbrace{\mathbb{E}\big[|\mathbf{s}[i]|^{2}\big]}_{1} \mathbf{A}_{\boldsymbol{\phi}_{i}}[k,l] \nonumber \\ &+ \sum_{i = 0}^{MN-1} \sum_{j \neq i} \underbrace{\mathbb{E}\big[\mathbf{s}[i] \mathbf{s}^{*}[j]\big]}_{\delta[i-j] = 0~\text{since}~j \neq i} \mathbf{A}_{\boldsymbol{\phi}_{i},\boldsymbol{\phi}_{j}}[k,l] \nonumber \\
        &= \sum_{i = 0}^{MN-1} \mathbf{A}_{\boldsymbol{\phi}_{i}}[k,l] \nonumber \\
        &= \sum_{n = 0}^{MN-1} e^{-\frac{j2\pi}{MN}l(n-k)} \nonumber \\ &\times \frac{1}{MN} \sum_{i = 0}^{MN-1} \boldsymbol{\phi}_{i}[n] \boldsymbol{\phi}_{i}^{*}\big[(n-k)_{{}_{MN}}\big],
    \end{align}
    where the summation over $i$ evaluates to $\mathds{1}\big\{k \equiv 0 \bmod{MN}\big\}$ since the basis $\boldsymbol{\phi}$ is orthonormal. Therefore, we obtain:
    \begin{align}
        \label{eq:sys6}
        \mathbb{E}\big[\mathbf{A}_{\mathbf{x}}[k,l]\big] &= \sum_{n = 0}^{MN-1} \mathds{1}\big\{k \equiv 0 \bmod{MN}\big\} e^{-\frac{j2\pi}{MN}l(n-k)} \nonumber \\
        &= \mathds{1}\big\{k \equiv 0 \bmod{MN}\big\} \frac{1}{MN} \sum_{n = 0}^{MN-1} e^{-\frac{j2\pi}{MN}ln} \nonumber \\
        &= \mathds{1}\big\{k,l \equiv 0 \bmod{MN}\big\},
    \end{align}
    since the final expression follows from Identity~\ref{idty:sumrootsofunity}.
    
    Therefore, for any orthonormal basis $\boldsymbol{\phi}$, we have:
    \begin{align}
        \label{eq:sys7}
        \mathbb{E}\big[\widehat{\mathbf{h}}[k,l]\big] &= \mathbf{h}[k,l] *_{\sigma_{{}_{d}}} \mathds{1}\big\{k,l \equiv 0 \bmod{MN}\big\} \nonumber \\ 
        &= \mathbf{h}[k,l],
    \end{align}
    assuming $\mathbf{h}[k,l]$ is supported within $0 \leq k,l \leq (MN-1)$.
\end{IEEEproof}

Fig.~\ref{fig:crossamb_data} illustrates the above result by plotting the magnitude of the self-ambiguity function $\mathbf{A}_{\mathbf{x}}[k,l]$ for $4$-QAM and $16$-QAM data modulated on different bases $\boldsymbol{\phi}$. As suggested by Theorem~\ref{thm:data_amb}, $\mathbf{A}_{\mathbf{x}}[k,l]$ approximates an ideal ``thumbtack'' in expected value, with cross-interactions between data symbols (second summation in~\eqref{eq:sys4}) resulting in a non-zero noise floor.

\begin{figure*}
    \centering
    \begin{subfigure}{0.47\linewidth}
    \includegraphics[width=\textwidth]{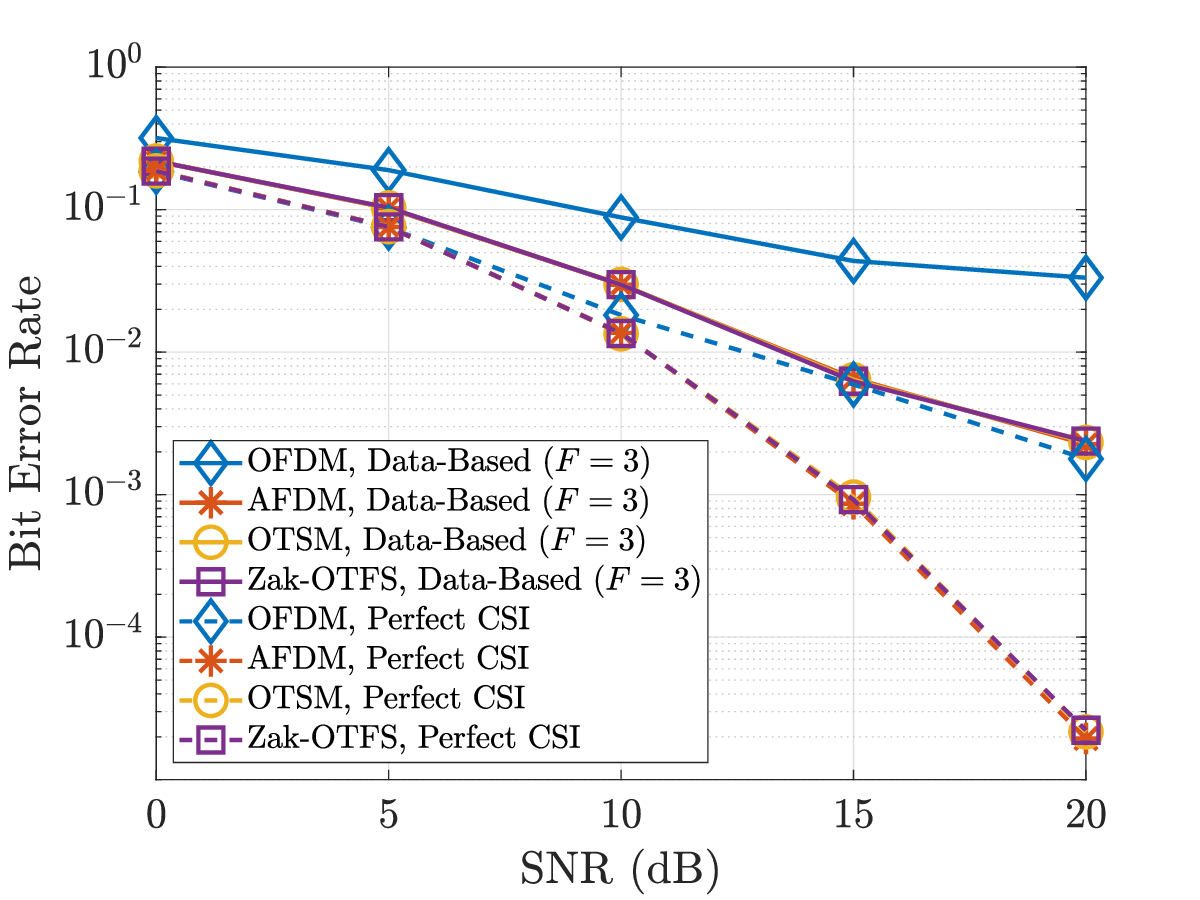}
    \caption{BER comparison of different modulations.}
        \label{fig:ber_vs_wvf}
    \end{subfigure}
    \begin{subfigure}{0.47\linewidth}
        \includegraphics[width=\textwidth]{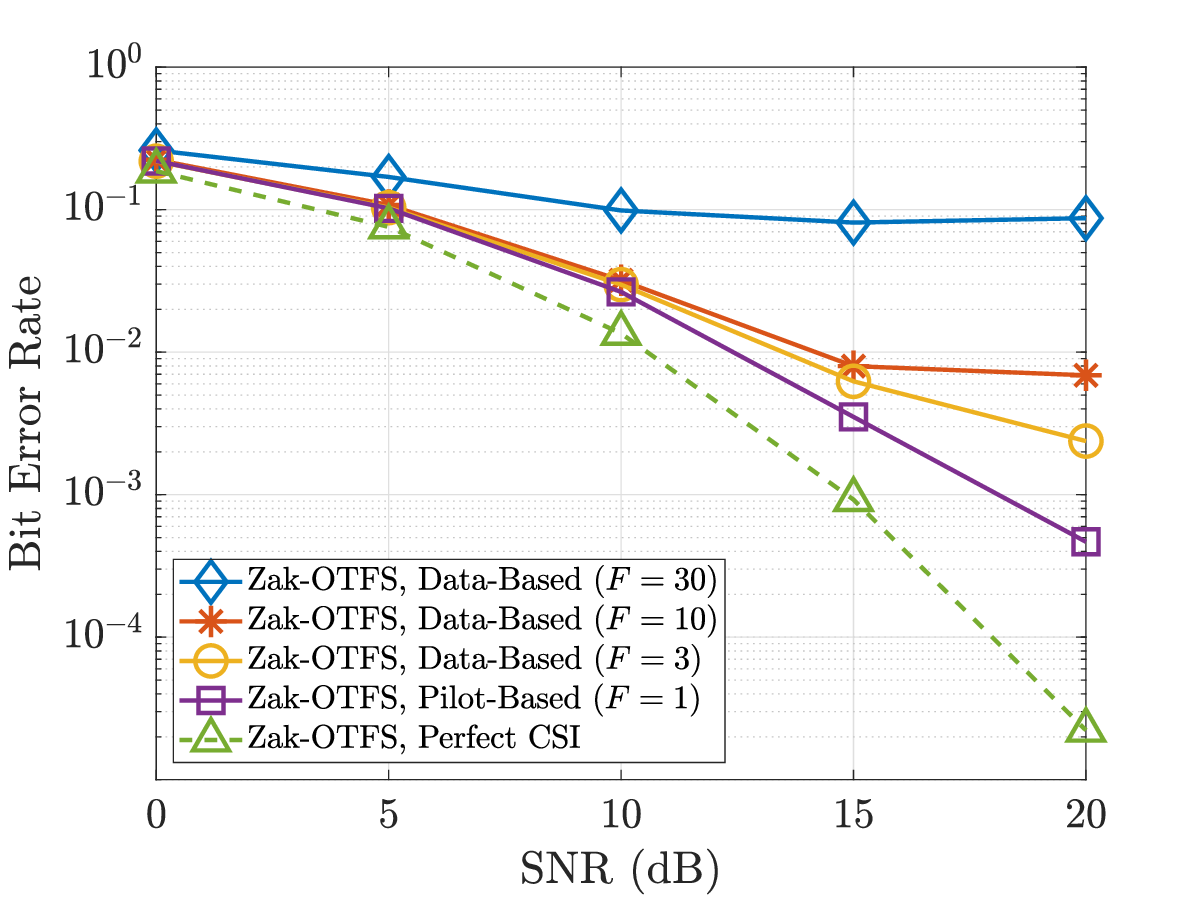}
    \caption{Data-based Zak-OTFS BER for different values of $F$.}
        \label{fig:ber_zak_vs_f}
    \end{subfigure}

    \vspace{0.5em}

    \begin{subfigure}{0.47\linewidth}
    \includegraphics[width=\textwidth]{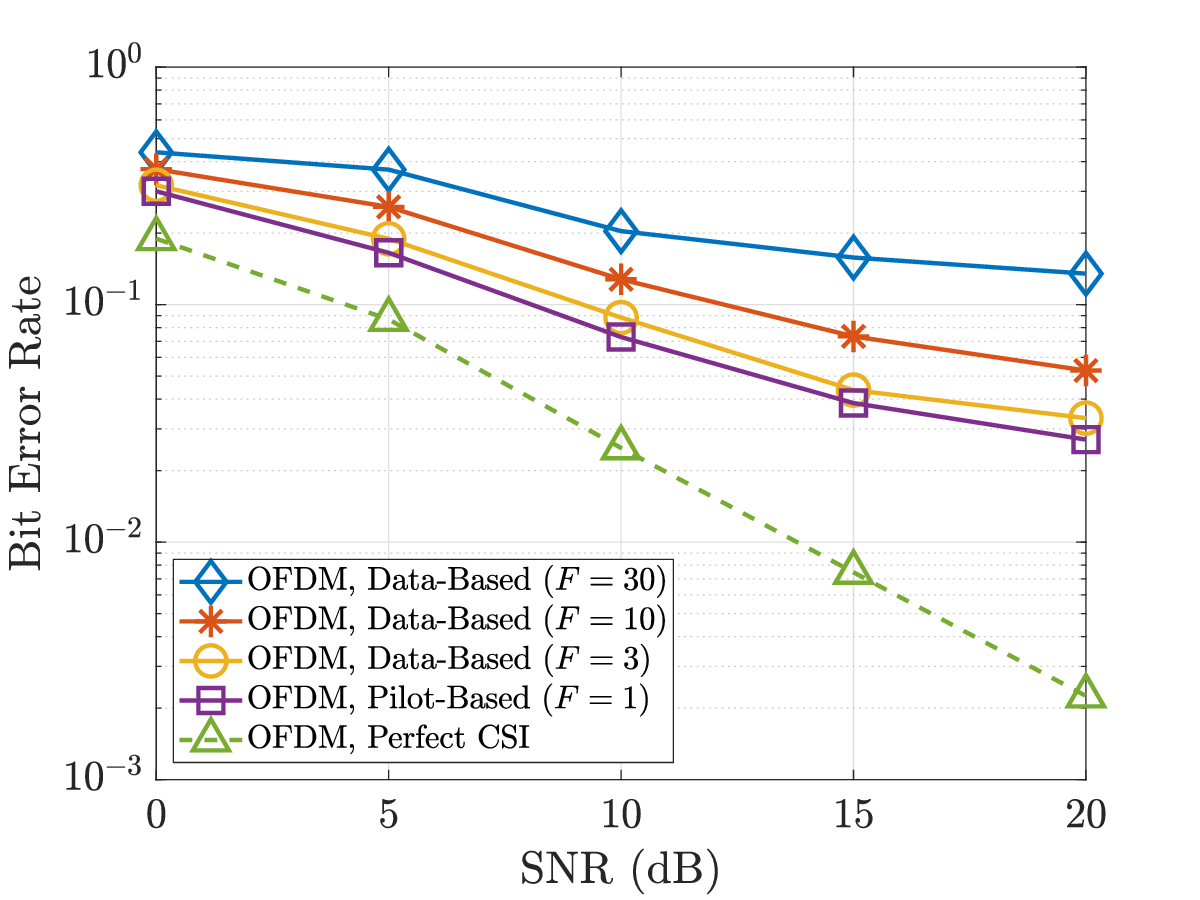}
    \caption{Data-based OFDM BER for different values of $F$.}
        \label{fig:ber_ofdm_vs_f}
    \end{subfigure}
    \begin{subfigure}{0.47\linewidth}
        \includegraphics[width=\textwidth]{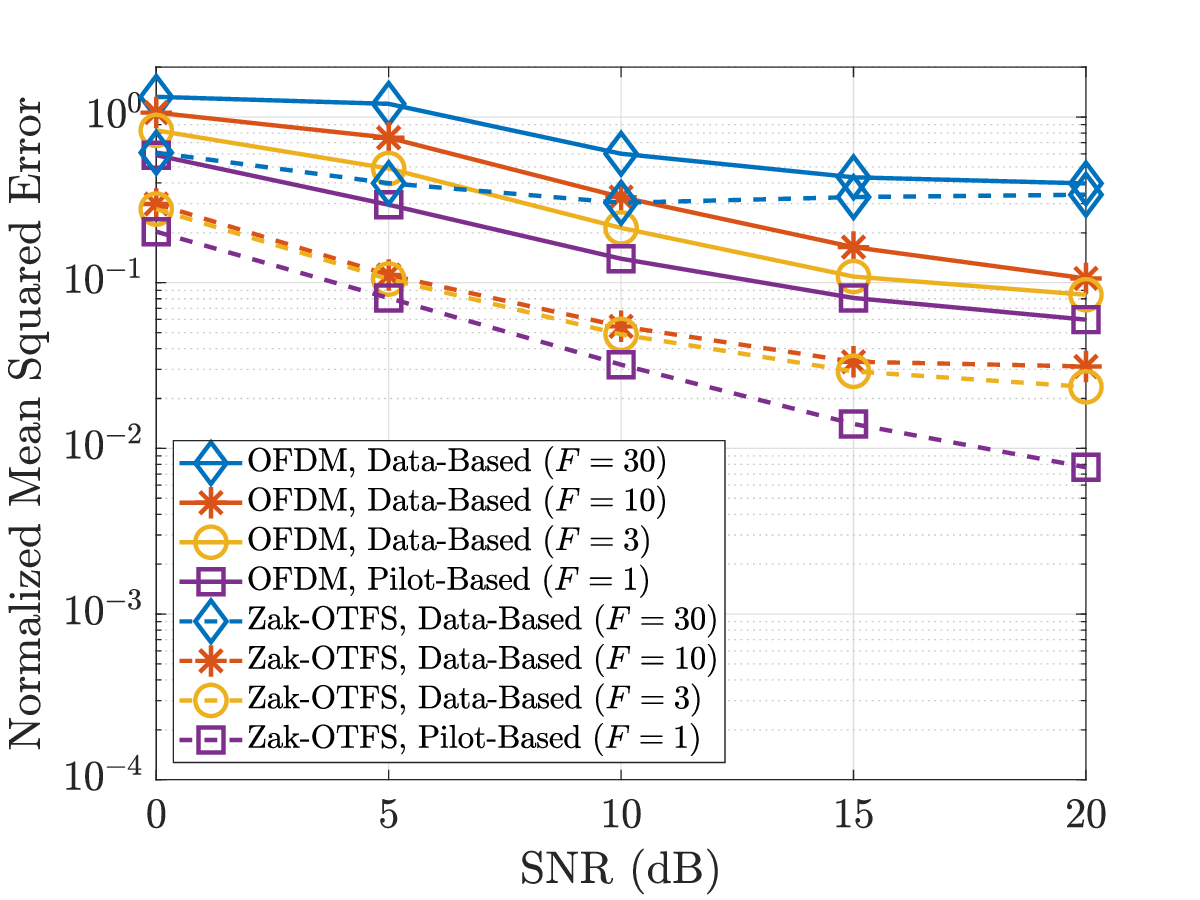}
    \caption{Data-based DD channel estimation NMSE.}
        \label{fig:nmse_vs_wvf_f}
    \end{subfigure}    

    \vspace{0.5em}

    \begin{subfigure}{0.47\linewidth}
        \includegraphics[width=\textwidth]{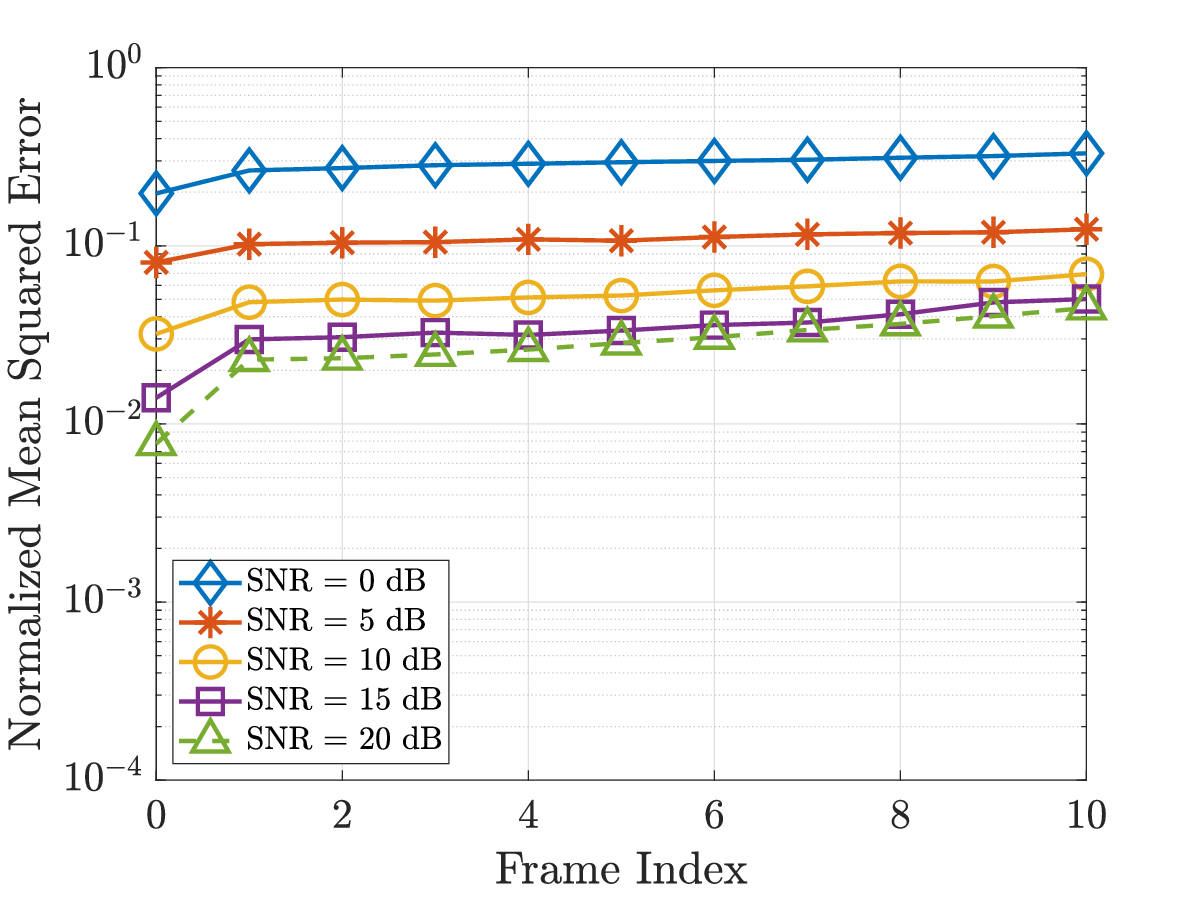}
    \caption{Zak-OTFS channel NMSE across pilot \& $F = 10$ data frames.}
        \label{fig:error_prop_nmse}
    \end{subfigure}
    \begin{subfigure}{0.47\linewidth}
    \includegraphics[width=\textwidth]{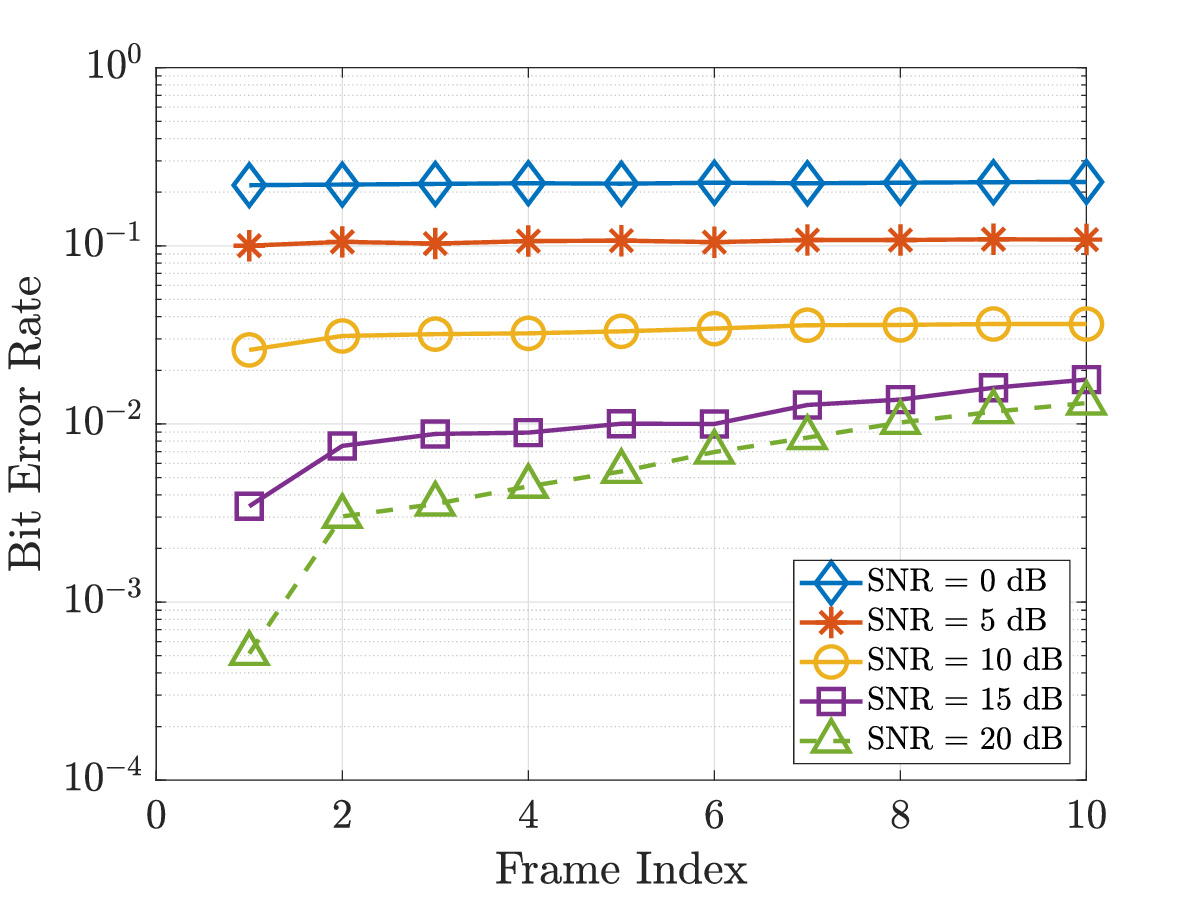}
    \caption{Data-based Zak-OTFS BER across $F = 10$ data frames.}
        \label{fig:error_prop_ber}
    \end{subfigure}
    
    \caption{Performance of both systems in Fig.~\ref{fig:overview} for uncoded $4$-QAM data modulated using various modulation schemes. (a) Bit error rate (BER) remains similar for Zak-OTFS, AFDM and OTSM, with significant gains over OFDM due to its non-predictability~\cite{Mehrotra2026_wvfcomp} and limitations of one-tap equalization. (b)-(c) BER of data-based systems degrade with increasing number of data frames $F$ due to error propagation. (d) Normalized mean squared error (NMSE) of data-based DD channel estimation increases as a function of $F$ and saturates at high SNR due to the data-based self-ambiguity noise floor (Fig.~\ref{fig:crossamb_data}). (e)-(f) Error propagation across $F = 10$ data frames. Accurate DD channel estimates from the pilot frame $f' = 0$ result in good BER performance in the initial data frame $f' = 1$. However, symbol detection errors in frame $f' = 1$ and the non-zero noise floor in the data-based self-ambiguity function in Fig.~\ref{fig:crossamb_data} degrade subsequent channel estimation and data detection performance.}
    \label{fig:ber_nmse_wvf_f}
\end{figure*}

\subsection{Necessary Conditions for Practical Implementation}
\label{subsec:diff_comm_feasibility}

To enable data-based DD channel estimation as per Theorem~\ref{thm:data_amb}, we consider the frame structure illustrated in Fig.~\ref{fig:overview}(\subref{fig:overview2}) with $F$ data frames in between pilot frames. Let $1 \leq f' \leq F$ sequentially index the $F$ data frames. DD channel estimates obtained from the pilot frame $f' = 0$ are used for equalizing and decoding the data symbols in frame $f' = 1$. Subsequently, a data-based estimate of the DD channel is obtained using the decoded data symbols, and the data-based DD channel estimate from frame $f' = 1$ is used to decode the data symbols in frame $f' = 2$. This process is repeated for all remaining data frames, and the entire procedure restarts on the transmission of another pilot frame after $F$ data frames.

% Given the sequential nature of the approach, it is crucial that the initial pilot-based DD channel estimates are accurate to ensure near-optimal data decoding performance in frame $f' = 1$. This requires: (i) channel coherence time spanning at least two frame durations, and (ii) frame bandwidth within the channel coherence bandwidth, i.e.,
The sequential nature of our approach requires accurate initial pilot-based DD channel estimates for near-optimal data decoding performance in frame $f' = 1$. This requires: (i) channel coherence time spanning at least two frame durations, and (ii) frame bandwidth within the channel coherence bandwidth:
\begin{align}
    \label{eq:cond}
    2T = \frac{2N}{\Delta f} \leq T_{\mathsf{c}} = \frac{1}{\nu_{\max}} &\implies \Delta f \geq 2N \nu_{\max}, \nonumber \\
    B = M \Delta f \leq B_{\mathsf{c}} = \frac{1}{\tau_{\max}} &\implies \Delta f \leq \frac{1}{M\tau_{\max}}.
\end{align}
% where all variables are defined as per Section~\ref{sec:prelim}.

\section{Numerical Results}
\label{sec:results}

\subsection{Simulation Configuration}
\label{subsec:sim_config}

We conduct numerical simulations using a 3GPP-compliant $P=6$ path Vehicular-A (Veh-A) channel model~\cite{veh_a}, whose power-delay profile is shown in Table~\ref{tab:veh_a}. The Doppler of each path is simulated as $\nu_i = \nu_{\max}\cos(\theta_i)$, with $\theta_i$ uniformly distributed in $[-\pi, \pi)$ and $\nu_{\max} = 815$ Hz denoting the maximum channel Doppler spread\footnote{Our channel model represents propagation environments with \textit{fractional} delay and Doppler shifts since the path delays $\tau_i$ in Table~\ref{tab:veh_a} and Doppler shifts $\nu_i$ are non-integer multiples of the respective resolutions $\nicefrac{1}{B}$ and $\nicefrac{1}{T}$. Further, at $3$ GHz carrier frequency, $\nu_{\max} = 815$ Hz corresponds to a highly dynamic scenario with maximum speed $81.5$ m/s, e.g., a high-speed train.}. To satisfy the necessary conditions in~\eqref{eq:cond}, we consider parameters\footnote{There is no per-symbol cyclic prefix (CP) in Zak-OTFS, AFDM, OTSM. In OFDM, we simulate the channel after CP removal, where CP $> \tau_{\max}$.}: $M=13, N=16$, $\Delta f = 30$ kHz, for which $B = 0.39$ MHz, $T = 0.533$ ms, such that $2N\nu_{\max} = 26.08~\text{kHz} \leq \Delta f = 30~\text{kHz} \leq \frac{1}{M \tau_{\max}} = 30.647$ kHz. In every pilot frame, we generate a random DD channel realization with random per-path Doppler and per-path channel gain $h_i = \alpha_i e^{j\psi_i}$, where the relative power $|\alpha_i|^2$ is as per Table~\ref{tab:veh_a} and $\psi_i$ uniformly distributed in $[-\pi, \pi)$. The channel is subsequently evolved forward across $F$ data frames as $\tau_i(f') = \tau_i + \frac{\nu_i}{f_c}f'T$ and $h_i(f') = h_i \big(\frac{1+c\tau_i}{1+c\tau_i(f')}\big)$ for center frequency $f_c = 2.4$ GHz and speed of light $c$. To generate the channel spreading function $\mathbf{h}(\tau,\nu)$, we consider a Gaussian-sinc pulse shape $\mathbf{w}(\tau,\nu)$ in~\eqref{eq:prelim1a}, see~\cite{Chockalingam2025_gs,Mehrotra2026_iota} for more details.

We simulate both systems depicted in Fig.~\ref{fig:overview} with uncoded $4$-QAM and $16$-QAM data (which satisfy the condition in Theorem~\ref{thm:data_amb}) modulated using OFDM, AFDM, OTSM and Zak-OTFS. We assume no data symbols in the pilot frames\footnote{In the pilot frame, $\mathbf{x} = \boldsymbol{\phi}_{i}$ for some $i \in \mathbb{Z}_{MN}$ for AFDM, OTSM and Zak-OTFS, whereas for OFDM, $\mathbf{x}$ is per~\eqref{eq:prelim4} with all $MN$ symbols $\mathbf{s}$ known.} for all four modulations, with equal signal-to-noise ratio (SNR)\footnote{For Gaussian noise with known variance $\sigma^2$, $\text{SNR} = \nicefrac{\Vert \mathbf{y} \Vert_{2}^{2}}{(MN\sigma^2)}$.} for the pilot and data frames. We perform data detection using the minimum mean squared error (MMSE) estimator\footnote{Matrix $\mathbf{G}$ in~\eqref{eq:prelim4} is estimated in AFDM, OTSM, Zak-OTFS, whereas for OFDM, transfer-domain channel diagonals are estimated (one-tap equalizer).}~\cite{Tse2005} with known noise variance and hard symbol decisions.

\begin{table}[!t]
    \centering
    \caption{Power-delay profile of Veh-A channel model}
    \begin{tabular}{|c|c|c|c|c|c|c|}
         \hline
         Path index $i$ & $1$ & $2$ & $3$ & $4$ & $5$ & $6$ \\
         \hline
         Delay $\tau_i (\mu s)$ & $0$ & $0.31$ & $0.71$ & $1.09$ & $1.73$ & $2.51$ \\
         \hline
         Relative power $|\alpha_i|^2$ (dB) & $0$ & $-1$ & $-9$ & $-10$ & $-15$ & $-20$ \\
         \hline
    \end{tabular}
    \label{tab:veh_a}
\end{table}

\subsection{Overall System Performance}
\label{subsec:sim_data_detecn}

\begin{figure}
    \centering
    \includegraphics[width=0.95\linewidth]{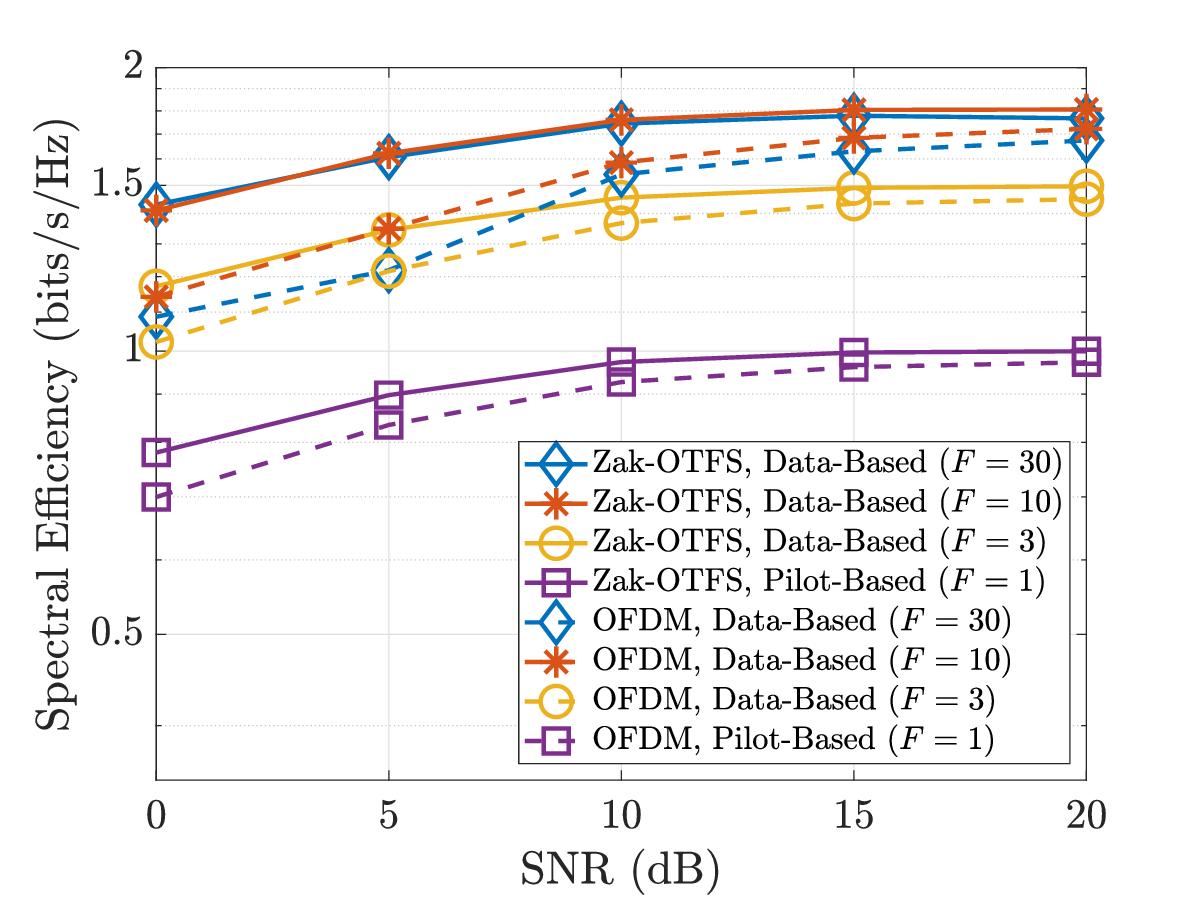}
    \caption{Data-based systems achieve higher spectral efficiency than pilot-based systems due to smaller pilot overhead.}
    \label{fig:se}
\end{figure}

\begin{figure*}
    \centering
    \begin{subfigure}{0.46\linewidth}
    \includegraphics[width=\textwidth]{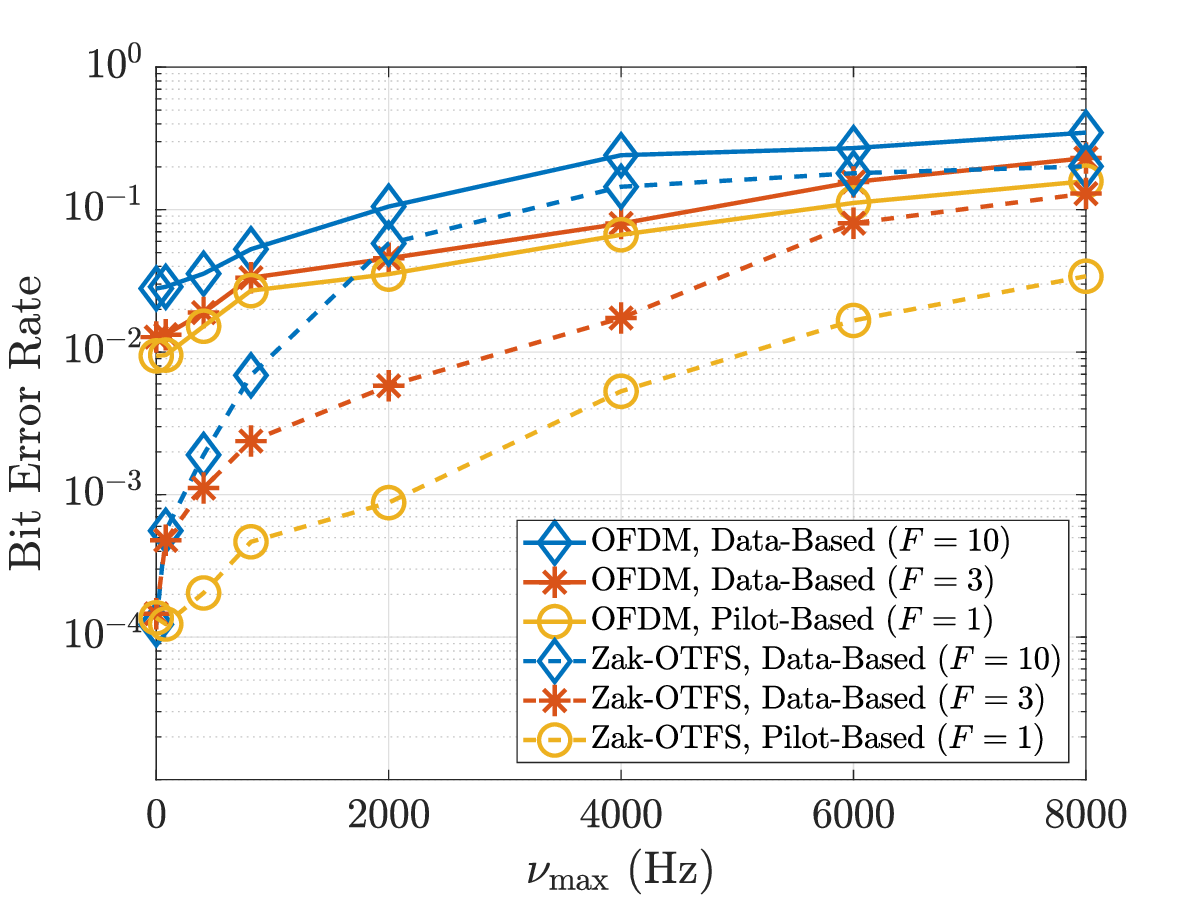}
    \caption{Bit error rate (BER).}
        \label{fig:ber_vs_numax}
    \end{subfigure}
    \begin{subfigure}{0.46\linewidth}
    \includegraphics[width=\textwidth]{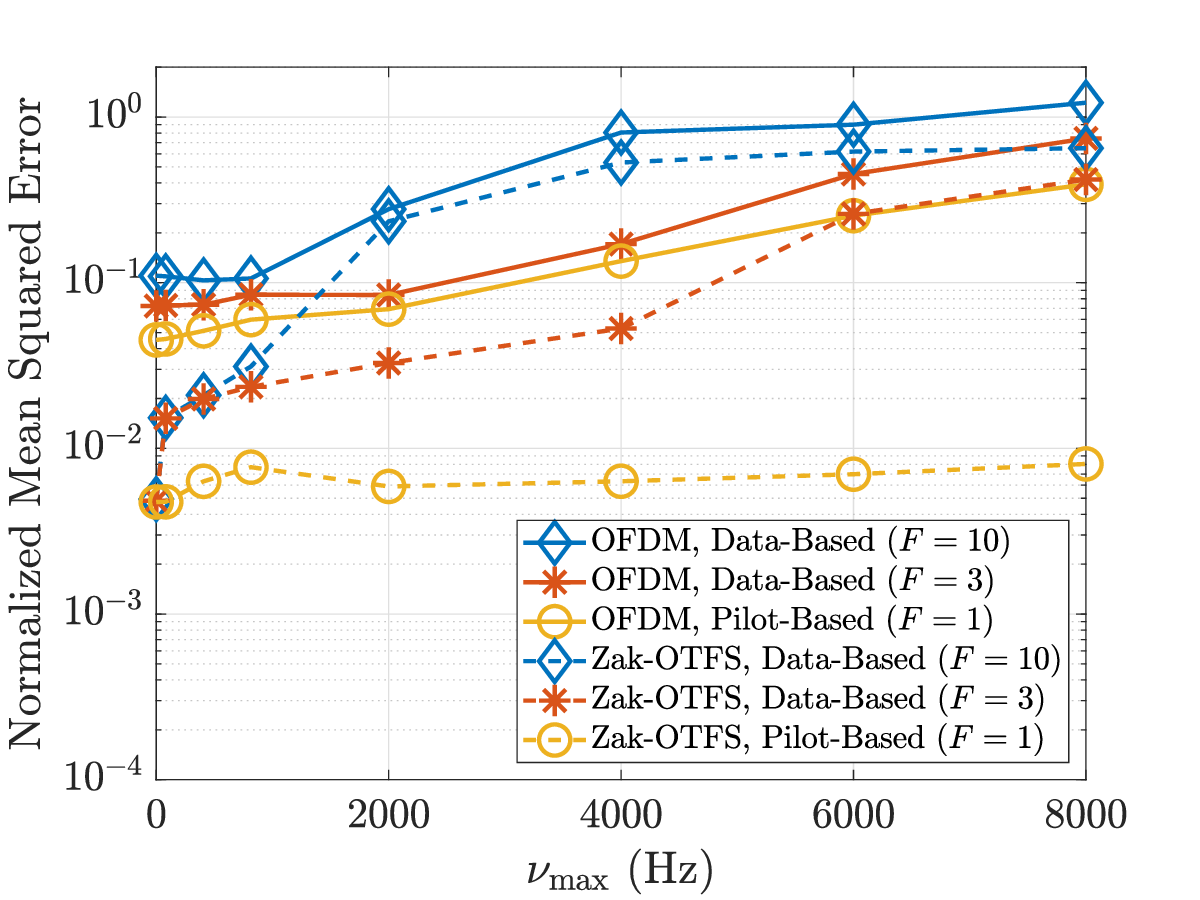}
    \caption{Normalized mean squared error (NMSE).}
        \label{fig:nmse_vs_numax}
    \end{subfigure}
    \caption{Performance of data-based systems degrades for $\nu_{\max} > 937.5$ Hz where the necessary conditions in~\eqref{eq:cond} do not hold.}
    \label{fig:ber_nmse_vs_numax}
\end{figure*}

\begin{figure}
    \centering
    \includegraphics[width=0.95\linewidth]{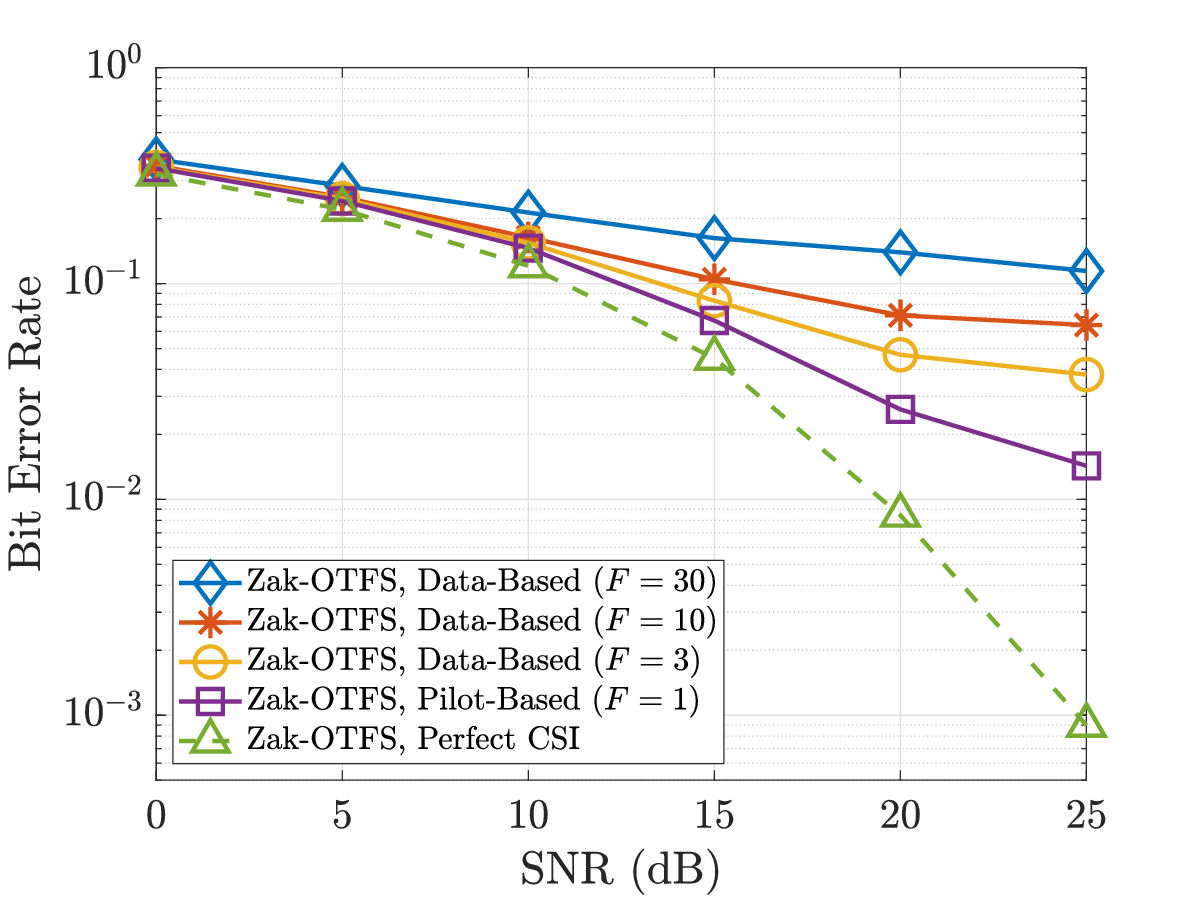}
    \caption{Data-based DD channel estimation with $16$-QAM.}
    \label{fig:16qam_vs_f}
\end{figure}

Fig.~\ref{fig:ber_nmse_wvf_f} compares the performance of the pilot-based and data-based systems from Fig.~\ref{fig:overview} with uncoded $4$-QAM data. 

Fig.~\ref{fig:ber_nmse_wvf_f}(\subref{fig:ber_vs_wvf}) shows that the bit error rate\footnote{We simulate $\geq 2500$ realizations per SNR until $200$ errors are counted.} (BER) of predictable modulation schemes (AFDM, OTSM and Zak-OTFS)~\cite{Mehrotra2026_wvfcomp} is similar\footnote{Due to similar performance, subsequent results only consider Zak-OTFS.} with gains over OFDM since the latter is not predictable, hence has poor performance even with perfect channel state information (CSI). Data-based AFDM / OTSM / Zak-OTFS systems with $F = 3$ offer similar performance as OFDM with perfect CSI. Data-based OFDM has further degraded performance due to poor initial pilot-based channel estimates as a result of mobility-caused inter-carrier interference that cannot be estimated in one-tap equalization.

Figs.~\ref{fig:ber_nmse_wvf_f}(\subref{fig:ber_zak_vs_f})-(\subref{fig:ber_ofdm_vs_f}) illustrate the BER for data-based Zak-OTFS and OFDM systems. The performance degrades as the number of data frames $F$ increases due to increased error propagation -- errors in symbol detection degrade data-based DD channel estimation, which degrades symbol detection, and so on.

Fig.~\ref{fig:ber_nmse_wvf_f}(\subref{fig:nmse_vs_wvf_f}) illustrates the normalized mean squared error (NMSE) for DD channel estimation, defined as $\text{NMSE} = \frac{\sum_{k,l}|\widehat{\mathbf{h}}_{\mathrm{eff}}[k,l] - \mathbf{h}_{\mathrm{eff}}[k,l]|^{2}}{\sum_{k,l}|\mathbf{h}_{\mathrm{eff}}[k,l]|^{2}}$. The NMSE increases as a function of $F$ and saturates at high SNR due to the noise floor in the self-ambiguity function $\mathbf{A}_{\mathbf{x}}[k,l]$ in Fig.~\ref{fig:crossamb_data}.

Figs.~\ref{fig:ber_nmse_wvf_f}(\subref{fig:error_prop_nmse})-(\subref{fig:error_prop_ber}) illustrate error propagation across $F = 10$ data frames due to the sequential nature of the proposed approach. While initial pilot-based channel estimates result in accurate data detection in frame $f' = 1$, symbol detection errors and the non-zero noise floor in the data-based self-ambiguity function in Fig.~\ref{fig:crossamb_data} degrade data-based channel estimation NMSE in frame $f' = 1$. The degraded channel estimate degrades data detection performance in the next frame $f' = 2$, which further degrades current-frame data-based NMSE, and so on.

\subsection{Spectral Efficiency Comparison}
\label{subsec:sim_se}

Fig.~\ref{fig:se} compares the spectral efficiency, defined as $\text{SE} = \big(1-O)(1-\text{BER})\frac{MN\log_{2}{|\mathcal{A}|}}{BT}$ bits/s/Hz, where $O$ denotes the pilot overhead ($\nicefrac{1}{2}$ for Fig.~\ref{fig:overview}(\subref{fig:overview1}) and $\nicefrac{1}{(F+1)}$ for Fig.~\ref{fig:overview}(\subref{fig:overview2})), for $4$-QAM. For data-based Zak-OTFS, the degradation in BER from $F = 10$ to $F = 30$ in Fig.~\ref{fig:ber_nmse_wvf_f}(\subref{fig:ber_zak_vs_f}) is offset by the reduction in pilot overhead for $F = 30$, resulting in both systems providing the best possible SE of $\sim 1.8$ bits/s/Hz. However, the larger pilot overhead for small $F$ values does not compensate for the improvement in BER, resulting in poor SE. Similar conclusions follow for OFDM, however, its poorer BER in Fig.~\ref{fig:ber_nmse_wvf_f}(\subref{fig:ber_vs_wvf}) compared to Zak-OTFS reduces the SE.

\subsection{Impact of Channel Mobility}
\label{subsec:sim_numax}

Fig.~\ref{fig:ber_nmse_vs_numax} illustrates the impact of the channel Doppler spread $\nu_{\max}$ on the $4$-QAM system performance. For the choice of parameters in Section~\ref{subsec:sim_config}, the necessary conditions in~\eqref{eq:cond} are satisfied when $\nu_{\max} \leq \frac{\Delta f}{2N} = 937.5$ Hz. This is consistent with Fig.~\ref{fig:ber_nmse_vs_numax} where the performance degrades significantly for both Zak-OTFS and OFDM systems beyond $\nu_{\max} > 937.5$ Hz. However, predictable modulations such as Zak-OTFS are more resilient to higher Doppler spreads since they do not suffer from inter-carrier interference, unlike OFDM.

\subsection{Performance with $16$-QAM}
\label{subsec:sim_16qam}

Fig.~\ref{fig:16qam_vs_f} illustrates the data-based Zak-OTFS system performance with $16$-QAM, showing the feasibility of data-based DD channel estimation. As suggested by Theorem~\ref{thm:data_amb}, data-based DD channel estimation remains feasible with all zero-mean, unit average energy constellations, including $16$-QAM, albeit with poorer BER than $4$-QAM due to the smaller minimum distance between constellation points.

\section{Conclusion}
\label{sec:conclusion}

In this paper, we proposed a data-based DD channel estimation approach to reduce pilot overhead and increase spectral efficiency. The proposed approach is applicable to any modulation scheme provided the information symbols are drawn from a zero-mean, unit average energy constellation. Numerical results demonstrated $\sim 1.8 \times$ improvement in uncoded spectral efficiency over conventional pilot-based approaches. Future work will consider coding, constellation shaping, and generalize the approach to multi-antenna, multi-user systems.

% In this paper, we proposed a data-based DD channel estimation approach to reduce pilot overhead and increase spectral efficiency. The proposed approach is applicable to any modulation scheme provided the information symbols are drawn from a unit energy, zero-mean constellation. Numerical results with uncoded \textcolor{blue}{$4$-QAM} demonstrated $\sim 1.8 \times$ improvement in spectral efficiency over conventional pilot-based approaches. Future work will consider coding, constellation shaping, and turbo-based equalization, and also pursue generalizations of the approach to multi-antenna, multi-user systems.

\bibliographystyle{IEEEtran}
\bibliography{references}
\end{document}